\theoremstyle{plain}
\newtheorem{thm}{\protect\theoremname}
  \theoremstyle{definition}
  \newtheorem{defn}[thm]{\protect\definitionname}
  \theoremstyle{remark}
  \newtheorem{rem}[thm]{\protect\remarkname}
  \theoremstyle{plain}
  \newtheorem{lem}[thm]{\protect\lemmaname}
  \theoremstyle{plain}
  \newtheorem{prop}[thm]{\protect\propositionname}
\DeclareMathOperator{\cat}{cat}
  \providecommand{\definitionname}{Definition}
  \providecommand{\lemmaname}{Lemma}
  \providecommand{\propositionname}{Proposition}
  \providecommand{\remarkname}{Remark}
\providecommand{\theoremname}{Theorem}
\begin{document}

\title[Low energy solutions for Schroedinger
Maxwell systems]{Low energy solutions for the semiclassical limit of Schroedinger
Maxwell systems}

\author{Marco Ghimenti}
\address[Marco Ghimenti] {Dipartimento di Matematica,
  Universit\`{a} di Pisa, via F. Buonarroti 1/c, 56127 Pisa, Italy}
\email{marco.ghimenti@dma.unipi.it.}
\author{Anna Maria Micheletti}
\address[Anna Maria Micheletti] {Dipartimento di Matematica,
  Universit\`{a} di Pisa, via F. Buonarroti 1/c, 56127 Pisa, Italy}
\email{a.micheletti@dma.unipi.it.}

\begin{abstract}
We show that the number of solutions of Schroedinger Maxwell system
on a smooth bounded domain $\Omega\subset\mathbb{R}^{3}$. depends
on the topological properties of the domain. In particular we consider
the Lusternik-Schnirelmann category and the Poincar\'e polynomial
of the domain.
\end{abstract}
\subjclass[2010]{35J60, 35J20,58E30,81V10}
\date{\today}
\keywords{Scrhoedinger-Maxwell systems, Lusternik Schnirelman category, Morse Theory}

\maketitle
\begin{center}
{\bf Dedicated to our friend Bernhard}
\end{center}

\section{Introduction}

Given real numbers $q>0$, $\omega>0$ we consider the following Schroedinger
Maxwell system on a smooth bounded domain $\Omega\subset\mathbb{R}^{3}$.
\begin{equation}
\left\{ \begin{array}{cc}
-\varepsilon^{2}\Delta u+u+\omega uv=|u|^{p-2}u & \text{ in }\Omega\\
-\Delta v=qu^{2} & \text{ in }\Omega\\
u,v=0 & \text{ on }\partial\Omega
\end{array}\right.\label{eq:sms}
\end{equation}

This paper deals with the semiclassical limit of the system (\ref{eq:sms}),
i.e. it is concerned with the problem of finding solutions of (\ref{eq:sms})
when the parameter $\varepsilon$ is sufficiently small. This problem
has some relevance for the understanding of a wide class of quantum
phenomena. We are interested in the relation between the number of
solutions of (\ref{eq:sms}) and the topology of the bounded set $\Omega$.
In particular we consider the Lusternik Schnirelmann category $\cat\Omega$
of $\Omega$ in itself and its Poincar\'e polynomial $P_{t}(\Omega)$.

Our main results are the following.
\begin{thm}
\label{thm:1}Let $4<p<6$. For $\varepsilon$ small enough there
exist at least $\cat(\Omega)$ positive solutions of (\ref{eq:sms}).
\end{thm}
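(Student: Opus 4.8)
The plan is to follow the Benci--Cerami scheme: reduce the system to a single nonlocal equation, and then compare the topology of the low sublevels of the associated energy with that of $\Omega$ by means of a barycenter map.

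First I would eliminate $v$. For fixed $u\in H_{0}^{1}(\Omega)$ the linear problem $-\Delta v=qu^{2}$, $v=0$ on $\partial\Omega$, has a unique solution $v=\Phi(u)\ge0$, and $\Phi$ depends smoothly on $u$. Substituting into the first equation produces a reduced $C^{1}$ functional on $H_{0}^{1}(\Omega)$ of the form
\[
J_{\varepsilon}(u)=\frac12\int_{\Omega}\left(\varepsilon^{2}|\nabla u|^{2}+u^{2}\right)+\frac{\omega}{4}\int_{\Omega}\Phi(u)\,u^{2}-\frac1p\int_{\Omega}|u|^{p},
\]
whose critical points are exactly the solutions $(u,\Phi(u))$ of (\ref{eq:sms}); restricting to positive $u$ gives positive solutions. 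Because $\Omega$ is bounded and $4<p<6<2^{*}$, $J_{\varepsilon}$ is well defined and of class $C^{1}$; the nonlocal term is homogeneous of degree four, and the hypothesis $p>4$ makes it subordinate to the $L^{p}$ term, so $J_{\varepsilon}$ retains the mountain-pass geometry and satisfies the Palais--Smale condition. I would then work on the Nehari manifold $\mathcal{N}_{\varepsilon}=\{u\neq0:\langle J_{\varepsilon}'(u),u\rangle=0\}$, which is a natural constraint precisely because $p>4$ makes each fibering map $t\mapsto J_{\varepsilon}(tu)$ have a single maximum, so that critical points of $J_{\varepsilon}|_{\mathcal{N}_{\varepsilon}}$ are free critical points. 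The relevant comparison level is the ground-state energy $m_{\infty}$ of the autonomous limit problem $-\Delta U+U=U^{p-1}$ on $\mathbb{R}^{3}$, which after the scaling $x\mapsto\varepsilon x$ governs the leading asymptotics, so that $\inf_{\mathcal{N}_{\varepsilon}}J_{\varepsilon}\sim\varepsilon^{3}m_{\infty}$.

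The topological heart is an abstract Lusternik--Schnirelmann statement: if for a sublevel $\mathcal{N}_{\varepsilon}^{c}=\{u\in\mathcal{N}_{\varepsilon}:J_{\varepsilon}(u)\le c\}$ there are continuous maps $\Phi_{\varepsilon}:\Omega\to\mathcal{N}_{\varepsilon}^{c}$ and a barycenter map $\beta:\mathcal{N}_{\varepsilon}^{c}\to\Omega_{\delta}$ (a $\delta$-neighbourhood of $\Omega$) such that $\beta\circ\Phi_{\varepsilon}$ is homotopic to the inclusion $\Omega\hookrightarrow\Omega_{\delta}$, then $\cat\mathcal{N}_{\varepsilon}^{c}\ge\cat\Omega$, and hence $J_{\varepsilon}|_{\mathcal{N}_{\varepsilon}}$ has at least $\cat\Omega$ critical points at energy $\le c$. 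So the whole argument reduces to producing these two maps for a suitable $c=\varepsilon^{3}(m_{\infty}+o(1))$.

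To build $\Phi_{\varepsilon}$ I would take the ground state $U$, centre a rescaled truncated copy $U((x-\xi)/\varepsilon)$ at a point $\xi\in\Omega$, and project it onto $\mathcal{N}_{\varepsilon}$; the key estimate is that $J_{\varepsilon}(\Phi_{\varepsilon}(\xi))=\varepsilon^{3}(m_{\infty}+o(1))$ uniformly in $\xi$ as $\varepsilon\to0$, which forces showing that the Maxwell coupling $\int_{\Omega}\Phi(u)u^{2}$ contributes only at higher order --- here the hypothesis $p>4$ together with the exponential decay of $U$ is what guarantees the coupling is negligible. Conversely, defining $\beta$ as the normalised barycenter, the crux is a concentration estimate: any $u\in\mathcal{N}_{\varepsilon}$ with $J_{\varepsilon}(u)$ close to $\varepsilon^{3}m_{\infty}$ must concentrate near a single point of $\Omega$, so that $\beta(u)\in\Omega_{\delta}$. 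This I expect to be the main obstacle, since it rests on a splitting / concentration-compactness argument for the nonlocal functional and on ruling out, uniformly as $\varepsilon\to0$, the spreading of energy over several bubbles. Once both maps and the homotopy between $\beta\circ\Phi_{\varepsilon}$ and the inclusion are in place, the Lusternik--Schnirelmann category theorem yields at least $\cat\Omega$ critical points, which a maximum-principle argument shows can be taken positive, completing the proof.
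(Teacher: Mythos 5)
Your proposal follows essentially the same route as the paper: the Benci--Fortunato reduction via the solution map of $-\Delta v=qu^{2}$, the Nehari constraint (a natural constraint exactly because $p>4$), the map $\Phi_{\varepsilon}$ built from a truncated rescaled ground state with the estimate that the Maxwell term is of higher order, the barycenter map $\beta$ with a concentration result at levels near the limit ground-state energy, and the homotopy $\beta\circ\Phi_{\varepsilon}\simeq i$ giving $\cat\Omega$ critical points via Lusternik--Schnirelmann theory. The only cosmetic differences are the normalization (the paper divides by $\varepsilon^{3}$ so the comparison level is $m_{\infty}$ rather than $\varepsilon^{3}m_{\infty}$) and the positivity device (the paper puts $(u^{+})^{p}$ directly in the functional instead of invoking the maximum principle afterwards).
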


\begin{thm}
\label{thm:2}Let $4<p<6$. Assume that for $\varepsilon$ small enough
all the solutions of problem (\ref{eq:sms}) are non- degenerate.
Then there are at least $2P_{1}(\Omega)-1$ positive solutions.
\end{thm}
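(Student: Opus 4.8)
The plan is to realize the positive solutions of \eqref{eq:sms} as critical points of a single reduced functional and then to bound their number from below by Morse theory. As in the treatment of Theorem~\ref{thm:1}, I would first solve the second equation by writing $v=\Psi(u)$ through the operator $u\mapsto(-\Delta)^{-1}(qu^{2})$ with zero boundary data, and substitute it into the first one. Positive solutions of \eqref{eq:sms} then correspond to critical points of a $C^{2}$ functional $I_{\varepsilon}$ constrained to its Nehari manifold $\mathcal{N}_{\varepsilon}\subset H_{0}^{1}(\Omega)$; I would record that $I_{\varepsilon}$ is bounded below on $\mathcal{N}_{\varepsilon}$, satisfies the Palais--Smale condition, and that $\mathcal{N}_{\varepsilon}$ is contractible (being radially homeomorphic to the unit sphere of $H_{0}^{1}(\Omega)$). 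Under the standing non-degeneracy hypothesis every critical point is non-degenerate, so there are finitely many and each has a well-defined Morse index; hence the Morse polynomial $\mathcal{M}_{t}=\sum_{u}t^{m(u)}$ is a genuine counting object on each sublevel $\mathcal{N}_{\varepsilon}^{c}=\{I_{\varepsilon}\le c\}$.

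The topological input is exactly the ``photography'' construction behind Theorem~\ref{thm:1}: for $\varepsilon$ small there are maps $\Phi_{\varepsilon}\colon\Omega\to\mathcal{N}_{\varepsilon}^{c_{\varepsilon}}$ into a low sublevel and a barycentre map $\beta\colon\mathcal{N}_{\varepsilon}^{c_{\varepsilon}}\to\Omega$ with $\beta\circ\Phi_{\varepsilon}\simeq\mathrm{id}_{\Omega}$. Consequently $(\Phi_{\varepsilon})_{*}$ is injective on homology, so $H_{*}(\Omega)$ is a direct summand of $H_{*}(\mathcal{N}_{\varepsilon}^{c_{\varepsilon}})$ and therefore, coefficientwise,
\[
P_{t}(\mathcal{N}_{\varepsilon}^{c_{\varepsilon}})\ \ge\ P_{t}(\Omega),\qquad\text{in particular}\qquad P_{1}(\mathcal{N}_{\varepsilon}^{c_{\varepsilon}})\ \ge\ P_{1}(\Omega).
\]

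Now I would run the Morse relations twice. On the sublevel they read $\mathcal{M}_{t}(\mathcal{N}_{\varepsilon}^{c_{\varepsilon}})=P_{t}(\mathcal{N}_{\varepsilon}^{c_{\varepsilon}})+(1+t)Q_{1}(t)$, and for the pair $(\mathcal{N}_{\varepsilon},\mathcal{N}_{\varepsilon}^{c_{\varepsilon}})$ they read $\mathcal{M}_{t}(\mathcal{N}_{\varepsilon},\mathcal{N}_{\varepsilon}^{c_{\varepsilon}})=P_{t}(\mathcal{N}_{\varepsilon},\mathcal{N}_{\varepsilon}^{c_{\varepsilon}})+(1+t)Q_{2}(t)$, with $Q_{1},Q_{2}$ having non-negative integer coefficients. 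Since $\mathcal{N}_{\varepsilon}$ is contractible, the long exact sequence of the pair gives $\tilde{H}_{q}(\mathcal{N}_{\varepsilon},\mathcal{N}_{\varepsilon}^{c_{\varepsilon}})\cong\tilde{H}_{q-1}(\mathcal{N}_{\varepsilon}^{c_{\varepsilon}})$, whence $P_{t}(\mathcal{N}_{\varepsilon},\mathcal{N}_{\varepsilon}^{c_{\varepsilon}})=t\,\bigl(P_{t}(\mathcal{N}_{\varepsilon}^{c_{\varepsilon}})-1\bigr)$ (using that $\mathcal{N}_{\varepsilon}^{c_{\varepsilon}}$ is connected). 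Adding the two identities and evaluating at $t=1$, the total number of critical points is at least
\[
P_{1}(\mathcal{N}_{\varepsilon}^{c_{\varepsilon}})+\bigl(P_{1}(\mathcal{N}_{\varepsilon}^{c_{\varepsilon}})-1\bigr)=2P_{1}(\mathcal{N}_{\varepsilon}^{c_{\varepsilon}})-1\ \ge\ 2P_{1}(\Omega)-1,
\]
which is the asserted bound; positivity of the corresponding solutions follows from a maximum-principle argument as in Theorem~\ref{thm:1}.

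The steps demanding real care are the analytic prerequisites that make Morse theory legitimate: that $I_{\varepsilon}$ is genuinely $C^{2}$ on $\mathcal{N}_{\varepsilon}$ once the nonlocal term $\omega u\Psi(u)$ is incorporated in the range $4<p<6$, that the Palais--Smale condition holds at all relevant levels so that sublevels deformation-retract correctly, and that $c_{\varepsilon}$ can be taken to be a regular value. The decisive topological fact, however, is inherited from Theorem~\ref{thm:1}, namely $\beta\circ\Phi_{\varepsilon}\simeq\mathrm{id}_{\Omega}$ forcing $P_{1}(\mathcal{N}_{\varepsilon}^{c_{\varepsilon}})\ge P_{1}(\Omega)$; granting that and non-degeneracy, the doubling in the count arises automatically from splitting the critical points into those lying below $c_{\varepsilon}$ and those lying above it.
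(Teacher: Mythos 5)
Your proposal is correct in substance and shares the paper's overall architecture---the photography maps $\Phi_{\varepsilon}$, $\beta$ feeding into Morse relations split across a low sublevel---but it implements the Morse-theoretic step by a genuinely different route. The paper never does Morse theory on the Nehari manifold: it works with the \emph{free} functional $I_{\varepsilon}$ on the contractible space $H_{0}^{1}(\Omega)$, introduces an auxiliary regular level $\alpha/2$ with $\alpha=\inf_{\varepsilon}m_{\varepsilon}>0$ (this is why it needs the limit (\ref{eq:mepsminfty})) to cut away the trivial critical point $u=0$, and imports from Benci--Cerami the shift identities (\ref{eq:morse3}) and (\ref{eq:morse4}), which convert the relative Poincar\'e polynomials of the free sublevel pairs into $tP_{t}({\mathcal N}_{\varepsilon}\cap I_{\varepsilon}^{m_{\infty}+\delta})$ and $t^{2}\left(P_{t}({\mathcal N}_{\varepsilon}\cap I_{\varepsilon}^{m_{\infty}+\delta})-1\right)$; summing via (\ref{eq:morse5}) and evaluating at $t=1$ gives $2P_{1}(\Omega)-1$. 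You instead run the Morse relations intrinsically for the constrained functional on ${\mathcal N}_{\varepsilon}$, replacing (\ref{eq:morse3})--(\ref{eq:morse4}) by contractibility of ${\mathcal N}_{\varepsilon}$ and the long exact sequence of the pair $({\mathcal N}_{\varepsilon},{\mathcal N}_{\varepsilon}^{c_{\varepsilon}})$. This is more self-contained (no auxiliary level $\alpha/2$, no citation of the shift formulas), and the two ledgers agree: at a critical point $N_{\varepsilon}'(u)[v]=I_{\varepsilon}''(u)[u,v]$, so the constrained Morse index is the free index minus one, which is exactly the extra factor $t$ appearing in (\ref{eq:morse1}). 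A cosmetic point: the paper's maps run $\Phi_{\varepsilon}:\Omega^{-}\rightarrow{\mathcal N}_{\varepsilon}\cap I_{\varepsilon}^{m_{\infty}+\delta}$ and $\beta:{\mathcal N}_{\varepsilon}\cap I_{\varepsilon}^{m_{\infty}+\delta}\rightarrow\Omega^{+}$ with $\beta\circ\Phi_{\varepsilon}$ homotopic to the inclusion, not $\mathrm{id}_{\Omega}$; since $\Omega^{\pm}$ are homotopy equivalent to $\Omega$, Remark \ref{rem:morse} still yields $P_{t}({\mathcal N}_{\varepsilon}\cap I_{\varepsilon}^{m_{\infty}+\delta})=P_{t}(\Omega)+Z(t)$, which is the coefficientwise inequality you use.

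Two steps you assert too quickly, both repairable. First, non-degeneracy: the hypothesis concerns solutions of (\ref{eq:sms}), i.e.\ non-degeneracy of the free Hessian, while your Morse polynomials on ${\mathcal N}_{\varepsilon}$ require non-degeneracy of the restriction of $I_{\varepsilon}''(u)$ to $T_{u}{\mathcal N}_{\varepsilon}$; restricting a non-degenerate form to a codimension-one subspace can produce a degenerate one in general. Here it transfers, but you should say why: at a critical point $T_{u}{\mathcal N}_{\varepsilon}=\left\{ v\,:\,I_{\varepsilon}''(u)[u,v]=0\right\}$ is precisely the $I_{\varepsilon}''$-orthogonal complement of $\mathbb{R}u$, and $I_{\varepsilon}''(u)[u,u]=N_{\varepsilon}'(u)[u]<0$ on the Nehari manifold, so the Hessian block-diagonalizes and non-degeneracy (with index dropping by one) follows. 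Second, contractibility: ${\mathcal N}_{\varepsilon}$ is \emph{not} radially homeomorphic to the whole unit sphere of $H_{0}^{1}(\Omega)$, because by Lemma \ref{lem:teps} the ray through $w$ meets ${\mathcal N}_{\varepsilon}$ only when $w^{+}\neq0$ (if $w\le0$ then $N_{\varepsilon}(tw)=t^{2}\|w\|_{\varepsilon}^{2}+t^{4}\omega G_{\varepsilon}(w)>0$ for all $t>0$). Thus ${\mathcal N}_{\varepsilon}$ is homeomorphic to $\left\{ w\,:\,|w^{+}|_{\varepsilon,p}=1\right\}$, and since an open piece of a (contractible, infinite-dimensional) sphere need not itself be contractible, you need an argument: fix $\varphi_{0}>0$ on $\Omega$ and observe that $(1-s)w+s\varphi_{0}$ has nontrivial positive part for every $s\in[0,1]$, so $\left\{ w\,:\,w^{+}\neq0\right\}$ is contractible, and the normalization $w\mapsto w/|w^{+}|_{\varepsilon,p}$ retracts it onto the set above. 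With these two repairs, and choosing $m_{\infty}+\delta$ to be a regular value (which you flagged, and which PS plus non-degeneracy permits), your argument is a valid alternative to the paper's proof.
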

Schroedinger Maxwell systems recently received considerable attention
from the mathematical community. In the pioneering paper \cite{BF}
Benci and Fortunato studied system (\ref{eq:sms}) when $\varepsilon=1$
and without nonlinearity. Regarding the system in a semiclassical
regime Ruiz \cite{R} and D'Aprile-Wei \cite{DW1}
showed the existence of a family of radially symmetric solutions respectively
for $\Omega=\mathbb{R}^{3}$ or a ball. D'Aprile-Wei
\cite{DW2} also proved the existence of clustered solutions in the
case of a bounded domain $\Omega$ in $\mathbb{R}^{3}$. 

Recently, Siciliano \cite{S} relates the number of solution with
the topology of the set $\Omega$ when $\varepsilon=1$, and the nonlinearity
is a pure power with exponent $p$ close to the critical exponent
$6$. Moreover, in the case $\varepsilon=1$, many authors proved
results of existence and non existence of solution of (\ref{eq:sms})
in presence of a pure power nonlinearity $|u|^{p-2}u$, $2<p<6$ or
more general nonlinearities \cite{AR,ADP,AP,BJL,DM,IV,K,PS,WZ}.

In a forthcoming paper \cite{GM1}, we aim to use our approach to
give an estimate on the number of low energy solutions for Klein
Gordon Maxwell systems on a Riemannian manifold in terms of the topology
of the manifold and some information on the profile of the low energy
solutions.

In the following we always assume $4<p<6$.

\section{Notations and definitions}

In the following we use the following notations. 
\begin{itemize}
\item $B(x,r)$ is the ball in $\mathbb{R}^{3}$ centered in $x$ with radius
$r$.
\item The function $U(x)$ is the unique positive spherically symmetric
function in $\mathbb{R}^{3}$ such that
\[
-\Delta U+U=U^{p-1}\text{ in }\mathbb{R}^{3}
\]
we remark that $U$ and its first derivative decay exponentially at
infinity.
\item Given $\varepsilon>0$ we define $U_{\varepsilon}(x)=U\left(\frac{x}{\varepsilon}\right)$.
\item We denote by $\text{supp }\varphi$ the support of the function $\varphi$.
\item We define 
\[
m_{\infty}=\inf_{\int_{\mathbb{R}^{3}}|\nabla v|^{2}+v^{2}dx=
|v|_{L^{p}(\mathbb{R}^{3})}^{p}}\frac{1}{2}\int_{\mathbb{R}^{3}}|\nabla v|^{2}+v^{2}dx
-\frac{1}{p}|v|_{L^{p}(\mathbb{R}^{3})}^{p}
\]

\item We also use the following notation for the different norms for $u\in H_{g}^{1}(M)$:
\begin{eqnarray*}
\|u\|_{\varepsilon}^{2}=\frac{1}{\varepsilon^{3}}\int_{M}\varepsilon^{2}|\nabla u|^{2}+u^{2}dx &  & 
|u|_{\varepsilon,p}^{p}=\frac{1}{\varepsilon^{3}}\int_{\Omega}|u|^{p}dx\\
\|u\|_{H_{0}^{1}}^{2}=\int_{\Omega}|\nabla u|^{2}dx &  & |u|_{p}^{p}=\int_{\Omega}|u|^{p}dx
\end{eqnarray*}
and we denote by $H_{\varepsilon}$ the Hilbert space $H_{0}^{1}(\Omega)$
endowed with the $\|\cdot\|_{\varepsilon}$ norm.\end{itemize}
\begin{defn}
Let $X$ a topological space and consider a closed subset $A\subset X$.
We say that $A$ has category $k$ relative to $X$ ($\cat_{M}A=k$)
if $A$ is covered by $k$ closed sets $A_{j}$, $j=1,\dots,k$, which
are contractible in $X$, and $k$ is the minimum integer with this
property. We simply denote $\cat X=\cat_{X}X$.\end{defn}
\begin{rem}
Let $X_{1}$ and $X_{2}$ be topological spaces. If $g_{1}:X_{1}\rightarrow X_{2}$
and $g_{2}:X_{2}\rightarrow X_{1}$ are continuous operators such
that $g_{2}\circ g_{1}$ is homotopic to the identity on $X_{1}$,
then $\cat X_{1}\leq\cat X_{2}$ . \end{rem}
\begin{defn}
Let X be any topological space and let $H_{k}(X)$ denotes its $k$-th
homology group with coefficients in $\mathbb{Q}$. The Poincar\'e polynomial
$P_{t}(X)$ of $X$ is defined as the following power series in $t$
\[
P_{t}(X):=\sum_{k\ge0}\left(\text{dim}H_{k}(X)\right)t^{k}
\]

Actually, if $X$ is a compact space, we have that $\text{dim}H_{k}(X)<\infty$
and this series is finite; in this case, $P_{t}(X)$ is a polynomial
and not a formal series. \end{defn}
\begin{rem}
\label{rem:morse}Let $X$ and $Y$ be topological spaces. If $f:X\rightarrow Y$
and $g:Y\rightarrow X$ are continuous operators such that $g\circ f$
is homotopic to the identity on $X$, then $P_{t}(Y)=P_{t}(X)+Z(t)$
where $Z(t)$ is a polynomial with non-negative coefficients.

These topological tools are classical and can be found, e.g., in \cite{P}
and in \cite{B}.
\end{rem}

\section{Preliminary results}

Using an idea in a paper of Benci and Fortunato \cite{BF} we define
the map $\psi:H_{0}^{1}(\Omega)\rightarrow H_{0}^{1}(\Omega)$ defined
by the equation
\begin{equation}
-\Delta\psi(u)=qu^{2}\text{ in }\Omega\label{eq:psi}
\end{equation}

\begin{lem}
\label{lem:psi}The map $\psi:H_{0}^{1}(\Omega)\rightarrow H_{0}^{1}(\Omega)$
is of class $C^{2}$ with derivatives
\begin{eqnarray}
\psi'(u)[\varphi] & = & i^{*}(2qu\varphi)\label{eq:derprima}\\
\psi''(u)[\varphi_{1},\varphi_{2}] & = & i^{*}(2q\varphi_{1}\varphi_{2})\label{eq:derseconda}
\end{eqnarray}
where the operator $i_{\varepsilon}^{*}:L^{p'},|\cdot|_{\varepsilon,p'}\rightarrow H_{\varepsilon}$
is the adjoint operator of the immersion operator $i_{\varepsilon}:H_{\varepsilon}\rightarrow L^{p},|\cdot|_{\varepsilon,p}$.\end{lem}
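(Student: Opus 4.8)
The plan is to recognize $\psi$ as the composition of a smooth quadratic superposition map with the \emph{linear} solution operator $i^*$, so that every regularity question reduces to elementary estimates on the nonlinearity. First I would use the defining property of the adjoint: for $f\in L^{p'}$ the element $w=i^*(f)\in H_0^1(\Omega)$ is characterized by $\int_\Omega \nabla w\cdot\nabla\varphi\,dx=\int_\Omega f\varphi\,dx$ for every $\varphi\in H_0^1(\Omega)$. Comparing this with the weak formulation of \eqref{eq:psi} gives at once the representation $\psi(u)=i^*(qu^2)$. Since $i^*$ is a bounded linear operator, it suffices to prove that the map $N:H_0^1(\Omega)\to L^{p'}$, $N(u)=qu^2$, is of class $C^2$ and to transport its derivatives through $i^*$.

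The point that makes $N$ well defined and smooth is the exponent range $4<p<6$. Here $p'=p/(p-1)\in(6/5,4/3)$, hence $2p'=2p/(p-1)\in(12/5,8/3)\subset(2,6)$, so the Sobolev embedding $H_0^1(\Omega)\hookrightarrow L^{2p'}(\Omega)$ is continuous. Consequently $|u^2|_{p'}=|u|_{2p'}^2\le C\|u\|_{H_0^1}^2$, which shows $N(u)\in L^{p'}$, and Hölder's inequality with the splitting $1/p'=1/(2p')+1/(2p')$ yields $|u\varphi|_{p'}\le|u|_{2p'}|\varphi|_{2p'}\le C\|u\|_{H_0^1}\|\varphi\|_{H_0^1}$. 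Thus the candidate linear map $\varphi\mapsto 2qu\varphi$ and the candidate bilinear map $(\varphi_1,\varphi_2)\mapsto 2q\varphi_1\varphi_2$ are bounded from $H_0^1(\Omega)$ and $H_0^1(\Omega)\times H_0^1(\Omega)$ respectively into $L^{p'}$.

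Next I would compute the finite differences directly, exploiting that $N$ is a polynomial map. From the identity $N(u+\varphi)-N(u)-2qu\varphi=q\varphi^2$ and the estimate $|q\varphi^2|_{p'}=q|\varphi|_{2p'}^2\le C\|\varphi\|_{H_0^1}^2=o(\|\varphi\|_{H_0^1})$ one reads off that $N'(u)[\varphi]=2qu\varphi$ is the Fréchet derivative. Moreover $N'(u+h)[\varphi]-N'(u)[\varphi]=2qh\varphi$ shows, via the same Hölder bound, that $u\mapsto N'(u)$ is Lipschitz continuous with second derivative $N''(u)[\varphi_1,\varphi_2]=2q\varphi_1\varphi_2$. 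This second derivative is independent of $u$, hence trivially continuous, so $N\in C^2$. Composing with the bounded linear operator $i^*$ then gives $\psi=i^*\circ N\in C^2$ together with $\psi'(u)[\varphi]=i^*(N'(u)[\varphi])$ and $\psi''(u)[\varphi_1,\varphi_2]=i^*(N''(u)[\varphi_1,\varphi_2])$, which are precisely \eqref{eq:derprima} and \eqref{eq:derseconda}.

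I do not expect any genuine obstacle here: the argument is entirely soft once the representation $\psi(u)=i^*(qu^2)$ is in place, relying only on the linearity of $i^*$ and the algebraic expansion of $(u+\varphi)^2$. The sole step requiring attention is the exponent bookkeeping of the second paragraph, namely verifying that $2p'$ stays comfortably below the critical Sobolev exponent $6$ for all $4<p<6$; this is what guarantees the continuity of $H_0^1(\Omega)\hookrightarrow L^{2p'}(\Omega)$ and hence the boundedness of every superposition operator used above.
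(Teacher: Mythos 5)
Your proposal is correct and is exactly the ``standard'' argument the paper omits (its proof of Lemma \ref{lem:psi} is the single line ``The proof is standard''): writing $\psi=i^{*}\circ N$ with $N(u)=qu^{2}$ a polynomial map into $L^{p'}$, checking $2p'\in(12/5,8/3)\subset(2,6)$ so that $H_{0}^{1}(\Omega)\hookrightarrow L^{2p'}(\Omega)$, and transporting the derivatives of $N$ through the bounded linear operator $i^{*}$. Your reading of $i^{*}$ as the solution operator characterized by $\int_{\Omega}\nabla(i^{*}f)\cdot\nabla\varphi\,dx=\int_{\Omega}f\varphi\,dx$ is also the interpretation consistent with how (\ref{eq:derprima}) is actually used in the paper, namely via $-\Delta\bigl(\psi'(u)[\varphi]\bigr)=2qu\varphi$ in the proof of Lemma \ref{lem:Tder}.
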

\begin{proof}
The proof is standard.\end{proof}
\begin{lem}
\label{lem:Tder}The map $T:H_{0}^{1}(\Omega)\rightarrow\mathbb{R}$
given by
\[
T(u)=\int_{\Omega}u^{2}\psi(u)dx
\]
is a $C^{2}$ map and its first derivative is 
\[
T'(u)[\varphi]=4\int_{\Omega}\varphi u\psi(u)dx.
\]
\end{lem}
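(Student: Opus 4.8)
The plan is to establish the $C^2$ regularity of $T$ by writing it as a composition of the $C^2$ map $\psi$ from Lemma \ref{lem:psi} with smooth algebraic operations, and then to compute $T'$ directly by differentiating the product under the integral sign. First I would observe that $T(u) = \int_\Omega u^2 \psi(u)\,dx$ is a pairing of $u^2$ against $\psi(u)$; since $u \mapsto u^2$ is smooth from $H_0^1(\Omega)$ into a suitable $L^q$ space (using the Sobolev embedding $H_0^1(\Omega) \hookrightarrow L^6$ in dimension three, so that $u^2 \in L^3$) and $\psi$ is $C^2$ into $H_0^1(\Omega) \hookrightarrow L^6$, the duality pairing is well defined and the integrand is integrable. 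The $C^2$ regularity of $T$ then follows from the chain rule together with the bilinearity of the integral pairing.

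For the first derivative, I would differentiate $T(u) = \int_\Omega u^2 \psi(u)\,dx$ in the direction $\varphi$. By the product rule this yields two contributions:
\begin{equation}
T'(u)[\varphi] = \int_\Omega 2 u \varphi\, \psi(u)\,dx + \int_\Omega u^2 \psi'(u)[\varphi]\,dx. \nonumber
\end{equation}
The first term is already in the desired form up to a factor. The key step is to handle the second term using the structure of $\psi$. From equation (\ref{eq:psi}), $\psi(u)$ solves $-\Delta \psi(u) = q u^2$, and testing this equation against $\psi'(u)[\varphi]$, together with the analogous weak formulation defining $\psi'(u)[\varphi]$ via (\ref{eq:derprima}), lets me rewrite $\int_\Omega u^2 \psi'(u)[\varphi]\,dx$ in terms of $\int_\Omega u\varphi\,\psi(u)\,dx$.

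The crux is the following symmetry argument. Since $-\Delta\psi(u) = qu^2$, I have $\int_\Omega u^2 \psi'(u)[\varphi]\,dx = \frac{1}{q}\int_\Omega (-\Delta \psi(u))\,\psi'(u)[\varphi]\,dx = \frac{1}{q}\int_\Omega \nabla\psi(u)\cdot\nabla\psi'(u)[\varphi]\,dx$ after integrating by parts (the boundary terms vanish since both functions lie in $H_0^1(\Omega)$). On the other hand, differentiating the defining equation $-\Delta\psi(u) = qu^2$ in the direction $\varphi$ gives $-\Delta\psi'(u)[\varphi] = 2qu\varphi$, so testing this against $\psi(u)$ yields $\int_\Omega \nabla\psi'(u)[\varphi]\cdot\nabla\psi(u)\,dx = 2q\int_\Omega u\varphi\,\psi(u)\,dx$. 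Comparing the two expressions shows that $\int_\Omega u^2\psi'(u)[\varphi]\,dx = 2\int_\Omega u\varphi\,\psi(u)\,dx$, and substituting back gives $T'(u)[\varphi] = 2\int_\Omega u\varphi\,\psi(u)\,dx + 2\int_\Omega u\varphi\,\psi(u)\,dx = 4\int_\Omega u\varphi\,\psi(u)\,dx$, as claimed.

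I expect the main obstacle to be the rigorous justification of differentiating under the integral sign and of the integration by parts for the term involving $\psi'(u)[\varphi]$: one must verify that all the pairings are between genuinely dual Sobolev/Lebesgue spaces so that the manipulations are legitimate, and that the self-adjointness of $(-\Delta)^{-1}$ is being applied to functions with the required regularity. Once the functional-analytic setup is in place, the symmetry of the inverse Laplacian does the essential work, and the factor of $4$ emerges cleanly from combining the two equal contributions.
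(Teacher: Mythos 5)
Your proposal is correct and follows essentially the same route as the paper: the product rule gives $T'(u)[\varphi]=2\int_\Omega u\varphi\,\psi(u)\,dx+\int_\Omega u^2\psi'(u)[\varphi]\,dx$, and the second term is converted into the first by exploiting the self-adjointness of $-\Delta$ together with the equations $-\Delta\psi(u)=qu^2$ and $-\Delta\psi'(u)[\varphi]=2qu\varphi$, exactly as in the paper (your explicit passage through $\int_\Omega\nabla\psi(u)\cdot\nabla\psi'(u)[\varphi]\,dx$ is just the paper's double integration by parts written out). Your sketch of the $C^2$ regularity via Sobolev embeddings is the standard argument the paper omits, so nothing is missing.
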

\begin{proof}
The regularity is standard. The first derivative is 
\[
T'(u)[\varphi]=2\int u\varphi\psi(u)+\int u^{2}\psi'(u)[\varphi].
\]

By (\ref{eq:derprima}) and (\ref{eq:psi}) we have 
\begin{eqnarray*}
2q\int u\varphi\psi(u) & = & -\int\Delta(\psi'(u)[\varphi])\psi(u)=-\int\psi'(u)[\varphi]\Delta\psi(u)=\\
 & = & \int\psi'(u)[\varphi]qu^{2}
\end{eqnarray*}
and the claim follows.
\end{proof}
At this point we consider the following functional $I_{\varepsilon}\in C^{2}(H_{0}^{1}(\Omega),\mathbb{R})$.
\begin{equation}
I_{\varepsilon}(u)=\frac{1}{2}\|u\|_{\varepsilon}^{2}+\frac{\omega}{4}G_{\varepsilon}(u)-\frac{1}{p}|u^{+}|_{\varepsilon,p}^{p}\label{eq:ieps}
\end{equation}
where 
\[
G_{\varepsilon}(u)=\frac{1}{\varepsilon^{3}}\int_{\Omega}u^{2}\psi(u)dx=\frac{1}{\varepsilon^{3}}T(u).
\]

By Lemma \ref{lem:Tder} we have 
\[
I_{\varepsilon}'(u)[\varphi]=\frac{1}{\varepsilon^{3}}\int_{\Omega}\varepsilon^{2}\nabla u\nabla\varphi+u\varphi+\omega u\psi(u)\varphi-(u^{+})^{p-1}\varphi
\]
\[
I_{\varepsilon}'(u)[u]=\|u\|_{\varepsilon}^{2}+\omega G_{\varepsilon}(u)-|u^{+}|_{\varepsilon,p}^{p}
\]
 then if $u$ is a critical points of the functional $I_{\varepsilon}$
the pair of positive functions $(u,\psi(u))$ is a solution of (\ref{eq:sms}).

\section{Nehari Manifold}

We define the following Nehari set
\[
{\mathcal N}_{\varepsilon}=\left\{ u\in H_{0}^{1}(\Omega)\smallsetminus0\ :\ N_{\varepsilon}(u):=I'_{\varepsilon}(u)[u]=0\right\} 
\]
In this section we give an explicit proof of the main properties of
the Nehari manifold, although standard, for the sake of completeness
\begin{lem}
${\mathcal N}_{\varepsilon}$ is a $C^{2}$ manifold and $\inf_{{\mathcal N}_{\varepsilon}}\|u\|_{\varepsilon}>0$.\end{lem}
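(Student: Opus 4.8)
The plan is to realize $\mathcal{N}_\varepsilon$ as the regular level set $N_\varepsilon^{-1}(0)$ inside $H_0^1(\Omega)\smallsetminus\{0\}$, and then to extract the lower bound directly from the Nehari identity together with a Sobolev inequality. First I would record the explicit form
\[
N_\varepsilon(u)=\|u\|_\varepsilon^2+\omega G_\varepsilon(u)-|u^+|_{\varepsilon,p}^p
\]
and check its regularity term by term: $\|\cdot\|_\varepsilon^2$ is smooth, $G_\varepsilon$ is a quartic form (hence $C^\infty$) since $\psi$ is the composition of $u\mapsto qu^2$ with the bounded linear solution operator of \eqref{eq:psi}, and $u\mapsto|u^+|_{\varepsilon,p}^p$ is $C^2$ on $H_0^1(\Omega)$ because $2<p<6$ makes the associated Nemytskii operator subcritical and twice continuously differentiable. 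Thus $N_\varepsilon\in C^2$. I would also note the homogeneities $G_\varepsilon(tu)=t^4G_\varepsilon(u)$ and $|tu^+|_{\varepsilon,p}^p=t^p|u^+|_{\varepsilon,p}^p$, and that $\psi(u)\ge0$ by the maximum principle applied to \eqref{eq:psi}, so that $G_\varepsilon(u)\ge0$.

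To see that $0$ is a regular value on $H_0^1(\Omega)\smallsetminus\{0\}$, I would differentiate along the radial direction. Using the homogeneities,
\[
N'_\varepsilon(u)[u]=2\|u\|_\varepsilon^2+4\omega G_\varepsilon(u)-p|u^+|_{\varepsilon,p}^p,
\]
and if $u\in\mathcal{N}_\varepsilon$ then $|u^+|_{\varepsilon,p}^p=\|u\|_\varepsilon^2+\omega G_\varepsilon(u)$, so substituting gives
\[
N'_\varepsilon(u)[u]=(2-p)\|u\|_\varepsilon^2+(4-p)\omega G_\varepsilon(u).
\]
Since $p>4$, both coefficients are strictly negative, while $\|u\|_\varepsilon^2>0$ (as $u\ne0$) and $\omega G_\varepsilon(u)\ge0$; hence $N'_\varepsilon(u)[u]<0$. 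In particular $N'_\varepsilon(u)\ne0$, so $0$ is a regular value and $\mathcal{N}_\varepsilon$ is a $C^2$ manifold by the implicit function theorem.

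For the uniform lower bound, on $\mathcal{N}_\varepsilon$ I would discard the nonnegative term $\omega G_\varepsilon(u)$ to obtain $\|u\|_\varepsilon^2\le|u^+|_{\varepsilon,p}^p\le|u|_{\varepsilon,p}^p$, and then invoke the Sobolev embedding $|u|_{\varepsilon,p}\le S_p\|u\|_\varepsilon$. The point I would emphasize is that $S_p$ can be taken independent of $\varepsilon$: after the rescaling $y=x/\varepsilon$, $w(y)=u(\varepsilon y)$, one computes $\|u\|_\varepsilon^2=\|w\|_{H^1(\Omega/\varepsilon)}^2$ and $|u|_{\varepsilon,p}^p=|w|_{L^p(\Omega/\varepsilon)}^p$, so the constant is that of $H^1(\mathbb{R}^3)\hookrightarrow L^p$ for $2\le p\le6$ after extending $w$ by zero, which is $\varepsilon$-free. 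Combining the two inequalities yields $\|u\|_\varepsilon^{p-2}\ge S_p^{-p}$, i.e. $\|u\|_\varepsilon\ge S_p^{-p/(p-2)}>0$, which is the asserted bound.

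All the computations are routine; the only places that deserve a little care are the claim $N_\varepsilon\in C^2$ (because of the positive part $u^+$ and the nonlocal quartic term $G_\varepsilon$), and the $\varepsilon$-uniformity of the Sobolev constant. This is why I would make the rescaling $y=x/\varepsilon$ explicit rather than quoting a bare embedding, so that the lower bound on $\|u\|_\varepsilon$ is manifestly independent of the small parameter.
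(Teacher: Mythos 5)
Your proposal is correct and follows essentially the same route as the paper: the manifold claim via the computation $N'_\varepsilon(u)[u]=(2-p)\|u\|_\varepsilon^2+(4-p)\omega G_\varepsilon(u)<0$ for $p>4$, and the lower bound from $\|u\|_\varepsilon^2\le|u^+|_{\varepsilon,p}^p\le C\|u\|_\varepsilon^p$ with $p>2$. The only differences are cosmetic and favor you: the paper phrases the bound as a contradiction with a sequence $\|u_n\|_\varepsilon\to0$, whereas you extract the explicit quantitative bound $\|u\|_\varepsilon\ge S_p^{-p/(p-2)}$ and, usefully, justify the $\varepsilon$-independence of the Sobolev constant by rescaling, which the paper leaves implicit.
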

\begin{proof}
If $u\in{\mathcal N}_{\varepsilon}$, using that $N_{\varepsilon}(u)=0$,
and $p>4$ we have 
\[
N'_{\varepsilon}(u)[u]=2\|u\|_{\varepsilon}^{2}+4\omega G_{\varepsilon}(u)-p|u^{+}|_{\varepsilon,p}
=(2-p)\|u\|_{\varepsilon}+(4-p)\omega G_{\varepsilon}(u)<0
\]
so ${\mathcal N}_{\varepsilon}$ is a $C^{2}$ manifold.

We prove the second claim by contradiction. Take a sequence $\left\{ u_{n}\right\} _{n}\in{\mathcal N}_{\varepsilon}$
with $\|u_{n}\|_{\varepsilon}\rightarrow0$ while $n\rightarrow+\infty$.
Thus, using that $N_{\varepsilon}(u)=0$,
\[
\|u_{n}\|_{\varepsilon}^{2}+\omega G_{\varepsilon}(u_{n})=|u_{n}^{+}|_{p,\varepsilon}^{p}\le C\|u_{n}\|_{\varepsilon}^{p},
\]
so 
\[
1<1+\frac{\omega G_{\varepsilon}(u)}{\|u_{n}\|_{\varepsilon}}\le C\|u_{n}\|_{\varepsilon}^{p-2}\rightarrow0
\]
and this is a contradiction.\end{proof}
\begin{rem}
\label{rem:nehari}If $u\in{\mathcal N}_{\varepsilon}$, then 
\begin{eqnarray*}
I_{\varepsilon}(u) & = & \left(\frac{1}{2}-\frac{1}{p}\right)\|u\|_{\varepsilon}^{2}+\omega\left(\frac{1}{4}-\frac{1}{p}\right)G_{\varepsilon}(u)\\
 & = & \left(\frac{1}{2}-\frac{1}{p}\right)|u^{+}|_{p,\varepsilon}^{p}-\frac{\omega}{4}G_{\varepsilon}(u)
\end{eqnarray*}
\end{rem}
\begin{lem}
It holds Palais-Smale condition for the functional $I_{\varepsilon}$
on ${\mathcal N}_{\varepsilon}$.\end{lem}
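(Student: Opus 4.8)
The plan is to verify the Palais--Smale condition for $I_\varepsilon$ restricted to $\mathcal N_\varepsilon$ by first upgrading it to a Palais--Smale condition for the free functional $I_\varepsilon$ on all of $H_0^1(\Omega)$, and then extracting a strongly convergent subsequence via the compactness of the immersion $i_\varepsilon$. Let $\{u_n\}\subset\mathcal N_\varepsilon$ be a sequence with $I_\varepsilon(u_n)\to c$ and $\|I_\varepsilon|_{\mathcal N_\varepsilon}'(u_n)\|\to 0$. Since $\Omega$ is bounded and $4<p<6$, the embedding $H_0^1(\Omega)\hookrightarrow L^p(\Omega)$ is compact, and this will be the source of the compactness.

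First I would establish boundedness of $\{u_n\}$ in $H_\varepsilon$. Using the expression from Remark~\ref{rem:nehari}, on the Nehari manifold one has
\[
I_\varepsilon(u_n)=\left(\frac12-\frac1p\right)\|u_n\|_\varepsilon^2+\omega\left(\frac14-\frac1p\right)G_\varepsilon(u_n).
\]
Both coefficients are positive because $p>4$, and $G_\varepsilon(u_n)\ge 0$ since $\psi(u_n)\ge 0$ by the maximum principle applied to \eqref{eq:psi}. Hence $I_\varepsilon(u_n)\ge\left(\frac12-\frac1p\right)\|u_n\|_\varepsilon^2$, so $I_\varepsilon(u_n)\to c$ forces $\|u_n\|_\varepsilon$ to be bounded. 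Next I would pass from the constrained to the free derivative: by the standard Lagrange multiplier argument there are $\lambda_n\in\mathbb R$ with $I_\varepsilon'(u_n)=\lambda_n N_\varepsilon'(u_n)+o(1)$, and testing this identity on $u_n$ while using $I_\varepsilon'(u_n)[u_n]=N_\varepsilon(u_n)=0$ together with the bound $N_\varepsilon'(u_n)[u_n]=(2-p)\|u_n\|_\varepsilon^2+(4-p)\omega G_\varepsilon(u_n)\le(2-p)\inf_{\mathcal N_\varepsilon}\|u\|_\varepsilon^2<0$ shown in the previous lemma, one concludes $\lambda_n\to 0$ and therefore $I_\varepsilon'(u_n)\to 0$ strongly in $H_\varepsilon^{-1}$.

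It then remains to deduce strong convergence from the free Palais--Smale condition. Up to a subsequence $u_n\rightharpoonup u$ weakly in $H_\varepsilon$, and by compactness of $i_\varepsilon$ we have $u_n\to u$ strongly in $L^p$, so $|u_n^+|_{\varepsilon,p}^{p-1}\to|u^+|_{\varepsilon,p}^{p-1}$ and, writing $u_n=i_\varepsilon^*(f_n)$ for the appropriate right-hand side, the nonlinear and Maxwell terms converge strongly. Concretely, the equation $I_\varepsilon'(u_n)=o(1)$ can be rewritten in the form $u_n=i_\varepsilon^*\bigl((u_n^+)^{p-1}-\omega u_n\psi(u_n)\bigr)+o(1)$; the right-hand side converges strongly in $H_\varepsilon$ because $(u_n^+)^{p-1}\to(u^+)^{p-1}$ in $L^{p'}$ and $\psi$ is continuous (indeed $C^2$ by Lemma~\ref{lem:psi}) and compact as a composition involving $i^*$. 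Hence $u_n$ converges strongly, which is the assertion.

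The main obstacle I anticipate is handling the coupling term $\omega u_n\psi(u_n)$ and confirming it does not destroy compactness: one must check that $u\mapsto u\psi(u)$ sends weakly convergent sequences (that converge strongly in $L^p$) to strongly convergent sequences in $L^{p'}$. This follows because $\psi$ maps bounded sets of $H_0^1$ into bounded sets and is continuous with respect to the relevant topologies, so $\psi(u_n)\to\psi(u)$, but verifying the precise exponents match under the $L^p$--$L^{p'}$ duality and that the product $u_n\psi(u_n)$ converges in $L^{p'}$ requires a careful but routine interpolation argument using $4<p<6$. The boundedness step and the Lagrange multiplier step are comparatively straightforward given the previous lemma; the compactness of the coupling term is where the structure of the problem really enters.
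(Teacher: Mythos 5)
Your proposal is correct and follows essentially the same route as the paper: the Lagrange multipliers vanish because $p>4$ and $\inf_{{\mathcal N}_{\varepsilon}}\|u\|_{\varepsilon}>0$, reducing matters to the free Palais--Smale condition, which is then settled via compactness of $i_{\varepsilon}^{*}$ and convergence of $\psi(u_{n})u_{n}$ in the dual Lebesgue norm (the paper carries out explicitly the H\"older estimate you only sketch, and your appeal to compactness of $i^{*}$ rather than mere continuity of $\psi$ is the correct mechanism). Your boundedness step, using the Nehari identity together with $G_{\varepsilon}(u_{n})\ge 0$, is a slight simplification of the paper's argument, which instead derives a contradiction from $pI_{\varepsilon}(u_{n})-I'_{\varepsilon}(u_{n})[u_{n}]$ divided by $\|u_{n}\|_{\varepsilon}$, but this is a cosmetic difference.
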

\begin{proof}
We start proving PS condition for $I_{\varepsilon}$. Let $\left\{ u_{n}\right\} _{n}\in H_{0}^{1}(\Omega)$
such that 
\begin{eqnarray*}
I_{\varepsilon}(u_{n})\rightarrow c &  & 
\left|I'_{\varepsilon}(u_{n})[\varphi]\right|\le\sigma_{n}\|\varphi\|_{\varepsilon}\text{ where }\sigma_{n}\rightarrow0
\end{eqnarray*}
We prove that $\|u_{n}\|_{\varepsilon}$ is bounded. Suppose $\|u_{n}\|_{\varepsilon}\rightarrow\infty$.
Then, by PS hypothesis
\[
\frac{pI_{\varepsilon}(u_{n})-I'_{\varepsilon}(u_{n})[u_{n}]}{\|u_{n}\|_{\varepsilon}}
=\left(\frac{p}{2}-1\right)\|u_{n}\|_{\varepsilon}+\left(\frac{p}{4}-1\right)\frac{G_{\varepsilon}(u_{n})}{\|u_{n}\|_{\varepsilon}}\rightarrow0
\]
and this is a contradiction because $p>4$.

At this point, up to subsequence $u_{n}\rightarrow u$ weakly in $H_{0}^{1}(\Omega)$
and strongly in $L^{t}(\Omega)$ for each $2\le t<6$. Since $u_{n}$
is a PS sequence
\[
u_{n}+\omega i_{\varepsilon}^{*}(\psi(u_{n})u_{n})-i_{\varepsilon}^{*}\left((u_{n}^{+})^{p-1}\right)\rightarrow0\text{ in }H_{0}^{1}(\Omega)
\]
we have only to prove that $i_{\varepsilon}^{*}(\psi(u_{n})u_{n})\rightarrow i_{\varepsilon}^{*}(\psi(u)u)$
in $H_{0}^{1}(\Omega)$, then we have to prove that 
\[
\psi(u_{n})u_{n}\rightarrow\psi(u)u\text{ in }L^{t'}
\]
We have $|\psi(u_{n})u_{n}-\psi(u)u|_{\varepsilon,t'}\le\left|\psi(u)(u_{n}-u)\right|_{\varepsilon,t'}
+\left|\left(\psi(u_{n})-\psi(u)\right)u_{n}\right|_{\varepsilon,t'}$.
We get 
\[
\int_{\Omega}|\psi(u_{n})-\psi(u)|^{\frac{t}{t-1}}|u_{n}|^{\frac{t}{t-1}}
\le\left(\int_{\Omega}|\psi(u_{n})-\psi(u)|^{t}\right)^{\frac{1}{t-1}}\left(\int_{\Omega}|u_{n}|^{\frac{t}{t-2}}\right)^{\frac{t-2}{t-1}}\rightarrow0,
\]
thus we can conclude easily.

Now we prove PS condition for the constrained functional. Let $\left\{ u_{n}\right\} _{n}\in{\mathcal N}_{\varepsilon}$
such that
\[
\begin{array}{cc}
I_{\varepsilon}(u_{n})\rightarrow c\\
\left|I'_{\varepsilon}(u_{n})[\varphi]-\lambda_{n}N'(u_{n})[\varphi]\right|\le\sigma_{n}\|\varphi\|_{\varepsilon} & \text{ with }\sigma_{n}\rightarrow0
\end{array}
\]
In particular 
$I'_{\varepsilon}(u_{n})\left[\frac{u_{n}}{\|u_{n}\|_{\varepsilon}}\right]
-\lambda_{n}N'(u_{n})\left[\frac{u_{n}}{\|u_{n}\|_{\varepsilon}}\right]\rightarrow0$.
Then 
\[
\lambda_{n}\left\{ \left(p-2\right)\|u_{n}\|_{\varepsilon}+\left(p-4\right)\omega\frac{G_{\varepsilon}(u_{n})}{\|u_{n}\|_{\varepsilon}}\right\} \rightarrow0
\]
thus $\lambda_{n}\rightarrow0$ because $p>4$. Since 
$N'(u_{n})=u_{n}-i_{\varepsilon}^{*}(4\omega\psi(u_{n})u_{n})-pi_{\varepsilon}^{*}(|u_{n}^{+}|^{p-1})$
is bounded we obtain that $\left\{ u_{n}\right\} _{n}$ is a PS sequence
for the free functional $I_{\varepsilon}$, and we get the claim
\end{proof}
\begin{lem}
\label{lem:teps}For all $w\in H_{0}^{1}(\Omega)$ such that $|w^{+}|_{\varepsilon,p}=1$
there exists a unique positive number $t_{\varepsilon}=t_{\varepsilon}(w)$
such that $t_{\varepsilon}(w)w\in{\mathcal N}_{\varepsilon}$. \end{lem}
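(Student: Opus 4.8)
The plan is to reduce the membership condition $t_{\varepsilon}w\in{\mathcal N}_{\varepsilon}$ to a single scalar equation in $t$ and to exploit the homogeneity of each term of $N_{\varepsilon}$. First I would record how the three pieces of $N_{\varepsilon}(tw)$ scale in $t>0$. Since $-\Delta\psi(tw)=q(tw)^{2}=t^{2}qw^{2}$, uniqueness for the linear Dirichlet problem (\ref{eq:psi}) gives $\psi(tw)=t^{2}\psi(w)$, whence $G_{\varepsilon}(tw)=t^{4}G_{\varepsilon}(w)$. Moreover $\|tw\|_{\varepsilon}^{2}=t^{2}\|w\|_{\varepsilon}^{2}$ and, because $(tw)^{+}=tw^{+}$ for $t>0$, the normalization $|w^{+}|_{\varepsilon,p}=1$ yields $|(tw)^{+}|_{\varepsilon,p}^{p}=t^{p}$. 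Substituting into $N_{\varepsilon}(u)=\|u\|_{\varepsilon}^{2}+\omega G_{\varepsilon}(u)-|u^{+}|_{\varepsilon,p}^{p}$ gives
\[
N_{\varepsilon}(tw)=t^{2}\|w\|_{\varepsilon}^{2}+\omega t^{4}G_{\varepsilon}(w)-t^{p}.
\]
Dividing by $t^{2}$, the condition $tw\in{\mathcal N}_{\varepsilon}$ becomes $t^{p-2}=\|w\|_{\varepsilon}^{2}+\omega t^{2}G_{\varepsilon}(w)=:f(t)$, so I am looking for positive roots of $d(t):=t^{p-2}-f(t)$.

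For existence I would argue by the intermediate value theorem. Note $G_{\varepsilon}(w)\ge0$: indeed $\psi(w)\ge0$ by the maximum principle applied to $-\Delta\psi(w)=qw^{2}\ge0$ with zero boundary data, so $G_{\varepsilon}(w)=\varepsilon^{-3}\int_{\Omega}w^{2}\psi(w)\,dx\ge0$; also $\|w\|_{\varepsilon}^{2}>0$ since $w\neq0$. Then $d(0)=-\|w\|_{\varepsilon}^{2}<0$, while as $t\to+\infty$ the term $t^{p-2}$ dominates the at-most-quadratic growth of $f$ (here $p-2>2$), so $d(t)\to+\infty$. Continuity then produces at least one $t_{\varepsilon}>0$ with $d(t_{\varepsilon})=0$.

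The uniqueness is the only point requiring care, and it is exactly where $p>4$ enters. Rather than trying to show $d$ is globally monotone, which it need not be, I would show that $d$ is strictly increasing at every one of its zeros. Computing $d'(t)=(p-2)t^{p-3}-2\omega tG_{\varepsilon}(w)$ and substituting the relation valid at a zero, namely $t^{p-3}=\|w\|_{\varepsilon}^{2}/t+\omega tG_{\varepsilon}(w)$, gives
\[
d'(t)=(p-4)\,\omega t\,G_{\varepsilon}(w)+(p-2)\,\frac{\|w\|_{\varepsilon}^{2}}{t}.
\]
Since $p>4$ and $G_{\varepsilon}(w)\ge0$, $\|w\|_{\varepsilon}^{2}>0$, $t>0$, this is strictly positive. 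Consequently two distinct zeros $t_{1}<t_{2}$ are impossible: at $t_{1}$ the function $d$ crosses $0$ strictly upward, so $d>0$ immediately to the right, and to return to $0$ at $t_{2}$ it would have to arrive with $d'(t_{2})\le0$, contradicting $d'(t_{2})>0$. Hence $t_{\varepsilon}$ is unique.

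I expect the scaling identity $\psi(tw)=t^{2}\psi(w)$ and the sign of $d'$ at a zero to be the two decisive observations; everything else is routine. It is worth emphasizing that the hypothesis $p>4$ (and not merely the superquadratic $p>2$) is essential here: it is what makes both coefficients $p-4$ and $p-2$ nonnegative in the expression for $d'$ at a zero, and thereby forces transversality of every crossing, which is precisely what gives uniqueness in the presence of the quartic term $\omega t^{4}G_{\varepsilon}(w)$.
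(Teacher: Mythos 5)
Your proof is correct and takes essentially the same route as the paper: the paper studies the fibering map $H(t)=I_{\varepsilon}(tw)=\frac{1}{2}t^{2}\|w\|_{\varepsilon}^{2}+\frac{t^{4}}{4}\omega G_{\varepsilon}(w)-\frac{t^{p}}{p}$, gets existence from the sign change of $H'$, and gets uniqueness by substituting the critical-point relation into $H''$ to find $H''(t_{\varepsilon})=(2-p)\|w\|_{\varepsilon}^{2}+(4-p)t_{\varepsilon}^{2}\omega G_{\varepsilon}(w)<0$, which is exactly your transversality computation $d'(t)=(p-2)\|w\|_{\varepsilon}^{2}/t+(p-4)\omega tG_{\varepsilon}(w)>0$ in different bookkeeping, since $H'(t)=-t\,d(t)$. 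Your explicit verifications of the homogeneities $\psi(tw)=t^{2}\psi(w)$, $G_{\varepsilon}(tw)=t^{4}G_{\varepsilon}(w)$ and of $G_{\varepsilon}(w)\ge0$ via the maximum principle are details the paper uses implicitly.
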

\begin{proof}
We define, for $t>0$ 
\[
H(t)=I_{\varepsilon}(tw)=\frac{1}{2}t^{2}\|w\|_{\varepsilon}^{2}+\frac{t^{4}}{4}\omega G_{\varepsilon}(w)-\frac{t^{p}}{p}.
\]
Thus
\begin{eqnarray}
H'(t) & = & t\left(\|w\|_{\varepsilon}^{2}+t^{2}\omega G_{\varepsilon}(w)-t^{p-2}\right)\label{eq:Hprimo}\\
H''(t) & = & \|w\|_{\varepsilon}^{2}+3t^{2}\omega G_{\varepsilon}(w)-(p-1)t^{p-2}\label{eq:Hsec}
\end{eqnarray}
By (\ref{eq:Hprimo}) there exists $t_{\varepsilon}>0$ such that
$H'(t_{\varepsilon})$. Moreover, by (\ref{eq:Hprimo}), (\ref{eq:Hsec})
and because $p>4$ we that $H''(t_{\varepsilon})<0$, so $t_{\varepsilon}$
is unique.
\end{proof}

\section{Main ingredient of the proof}

We sketch the proof of Theorem \ref{thm:1}. First of all, since the
functional $I_{\varepsilon}\in C^{2}$ is bounded below and satisfies
PS condition on the complete $C^{2}$ manifold ${\mathcal N}_{\varepsilon}$,
we have, by well known results, that $I_{\varepsilon}$ has at least
$\cat I_{\varepsilon}^{d}$ critical points in the sublevel
\[
I_{\varepsilon}^{d}=\left\{ u\in H^{1}\ :\ I_{\varepsilon}(u)\le d\right\} .
\]
We prove that, for $\varepsilon$ and $\delta$ small enough, it holds
\[
\cat\Omega\le\cat\left({\mathcal N}_{\varepsilon}\cap I_{\varepsilon}^{m_{\infty}+\delta}\right)
\]
where 
\[
m_{\infty}:=\inf_{{\mathcal N}_{\infty}}\frac{1}{2}\int_{\mathbb{R}^{3}}|\nabla v|^{2}+v^{2}dx-\frac{1}{p}\int_{\mathbb{R}^{3}}|v|^{p}dx
\]
\[
{\mathcal N}_{\infty}=\left\{ v\in H^{1}(\mathbb{R}^{3})\smallsetminus\left\{ 0\right\} \ :\ \int_{\mathbb{R}^{3}}|\nabla v|^{2}
+v^{2}dx=\int_{\mathbb{R}^{3}}|v|^{p}dx\right\} .
\]
To get the inequality $\cat\Omega\le\cat\left({\mathcal N}_{\varepsilon}\cap I_{\varepsilon}^{m_{\infty}+\delta}\right)$
we build two continuous operators
\begin{eqnarray*}
\Phi_{\varepsilon} & : & \Omega^{-}\rightarrow{\mathcal N}_{\varepsilon}\cap I_{\varepsilon}^{m_{\infty}+\delta}\\
\beta & : & {\mathcal N}_{\varepsilon}\cap I_{\varepsilon}^{m_{\infty}+\delta}\rightarrow\Omega^{+}.
\end{eqnarray*}
where 
\[
\Omega^{-}=\left\{ x\in\Omega\ :\ d(x,\partial\Omega)<r\right\} 
\]
\[
\Omega^{+}=\left\{ x\in\mathbb{R}^{3}\ :\ d(x,\partial\Omega)<r\right\} 
\]
with $r$ small enough so that $\cat(\Omega^{-})=\cat(\Omega^{+})=\cat(\Omega)$.

Following an idea in \cite{BC1}, we build these operators $\Phi_{\varepsilon}$
and $\beta$ such that $\beta\circ\Phi_{\varepsilon}:\Omega^{-}\rightarrow\Omega^{+}$
is homotopic to the immersion $i:\Omega^{-}\rightarrow\Omega^{+}$.
By the properties of Lusternik Schinerlmann category we have
\[
\cat\Omega\le\cat\left({\mathcal N}_{\varepsilon}\cap I_{\varepsilon}^{m_{\infty}+\delta}\right)
\]
which ends the proof of Theorem \ref{thm:1}. 

Concerning Theorem \ref{thm:2}, we can re-state classical results
contained in \cite{B,BC2} in the following form.
\begin{thm}
Let $I_{\varepsilon}$ be the functional (\ref{eq:ieps}) on $H^{1}(\Omega)$
and let $K_{\varepsilon}$ be the set of its critical points. If all
its critical points are non-degenerate then
\begin{equation}
\sum_{u\in K_{\varepsilon}}t^{\mu(u)}=tP_{t}(\Omega)+t^{2}(P_{t}(\Omega)-1)+t(1+t)Q(t)\label{eq:morse1}
\end{equation}
 where Q(t) is a polynomial with non-negative integer coefficients
and $\mu(u)$ is the Morse index of the critical point $u$.
\end{thm}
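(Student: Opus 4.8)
The plan is to transfer the whole problem onto the Nehari manifold and then run the Morse relations there, so I first record the link between the free and the constrained pictures. Every critical point $u$ of $I_{\varepsilon}$ on $H_{0}^{1}(\Omega)$ satisfies $N_{\varepsilon}(u)=I'_{\varepsilon}(u)[u]=0$, hence lies on ${\mathcal N}_{\varepsilon}$, and conversely the critical points of the restriction $I_{\varepsilon}|_{{\mathcal N}_{\varepsilon}}$ are exactly the free critical points (the Lagrange multiplier vanishes, as in the Palais--Smale computation of Section 4). By Lemma \ref{lem:teps} the radial map $t\mapsto I_{\varepsilon}(tw)$ has a strict maximum at the Nehari point, so $H''(t_{\varepsilon})<0$; this means the direction transversal to ${\mathcal N}_{\varepsilon}$ through $u$ is one of strict local maximum and contributes exactly one unit to the index. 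Thus if $m(u)$ denotes the Morse index of $u$ as a critical point of $I_{\varepsilon}|_{{\mathcal N}_{\varepsilon}}$, then $\mu(u)=m(u)+1$, and non-degeneracy for the free functional is equivalent to non-degeneracy for the constrained one. It therefore suffices to prove $\sum_{u\in K_{\varepsilon}}t^{m(u)}=P_{t}(\Omega)+t(P_{t}(\Omega)-1)+(1+t)Q(t)$ and multiply by $t$.

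Next I would fix $\delta$ so small that $c^{*}:=m_{\infty}+\delta$ is a regular value of $I_{\varepsilon}|_{{\mathcal N}_{\varepsilon}}$, and split the critical set according to whether $I_{\varepsilon}(u)\le c^{*}$ or $I_{\varepsilon}(u)>c^{*}$. Since $I_{\varepsilon}$ is bounded below (Remark \ref{rem:nehari}) and satisfies Palais--Smale on the complete $C^{2}$ manifold ${\mathcal N}_{\varepsilon}$, the classical Morse relations of \cite{B,BC2} apply to the sublevel and to the pair $({\mathcal N}_{\varepsilon},{\mathcal N}_{\varepsilon}\cap I_{\varepsilon}^{c^{*}})$, giving
\begin{align*}
\sum_{I_{\varepsilon}(u)\le c^{*}}t^{m(u)} &= P_{t}\!\left({\mathcal N}_{\varepsilon}\cap I_{\varepsilon}^{c^{*}}\right)+(1+t)Q_{1}(t),\\
\sum_{I_{\varepsilon}(u)>c^{*}}t^{m(u)} &= P_{t}\!\left({\mathcal N}_{\varepsilon},{\mathcal N}_{\varepsilon}\cap I_{\varepsilon}^{c^{*}}\right)+(1+t)Q_{2}(t),
\end{align*}
with $Q_{1},Q_{2}$ having non-negative coefficients. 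The key geometric input is that ${\mathcal N}_{\varepsilon}$ is contractible: by Lemma \ref{lem:teps} the map $w\mapsto t_{\varepsilon}(w)w$ is a homeomorphism from the sphere $\{|w^{+}|_{\varepsilon,p}=1\}$ of $H_{0}^{1}(\Omega)$ onto ${\mathcal N}_{\varepsilon}$, and that sphere is contractible in infinite dimensions. Hence $P_{t}({\mathcal N}_{\varepsilon})=1$, and the long exact sequence of the pair collapses to $H_{k}({\mathcal N}_{\varepsilon},{\mathcal N}_{\varepsilon}\cap I_{\varepsilon}^{c^{*}})\cong H_{k-1}({\mathcal N}_{\varepsilon}\cap I_{\varepsilon}^{c^{*}})$ for $k\ge2$ and vanishes in degrees $0,1$, so that $P_{t}({\mathcal N}_{\varepsilon},{\mathcal N}_{\varepsilon}\cap I_{\varepsilon}^{c^{*}})=t\left(P_{t}({\mathcal N}_{\varepsilon}\cap I_{\varepsilon}^{c^{*}})-1\right)$.

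It then remains to identify the Poincar\'e polynomial of the sublevel with that of $\Omega$. Using the photography operators $\Phi_{\varepsilon}:\Omega^{-}\to{\mathcal N}_{\varepsilon}\cap I_{\varepsilon}^{c^{*}}$ and $\beta:{\mathcal N}_{\varepsilon}\cap I_{\varepsilon}^{c^{*}}\to\Omega^{+}$ built above, for which $\beta\circ\Phi_{\varepsilon}$ is homotopic to the inclusion $\Omega^{-}\hookrightarrow\Omega^{+}$, Remark \ref{rem:morse} yields $P_{t}({\mathcal N}_{\varepsilon}\cap I_{\varepsilon}^{c^{*}})=P_{t}(\Omega)+Z(t)$ with $Z$ a polynomial with non-negative coefficients (recall $P_{t}(\Omega^{\pm})=P_{t}(\Omega)$ for $r$ small). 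Summing the two Morse relations and inserting this identity together with $P_{t}({\mathcal N}_{\varepsilon},{\mathcal N}_{\varepsilon}\cap I_{\varepsilon}^{c^{*}})=t(P_{t}(\Omega)+Z(t)-1)$ gives
\[
\sum_{u\in K_{\varepsilon}}t^{m(u)}=P_{t}(\Omega)+t\left(P_{t}(\Omega)-1\right)+(1+t)\bigl(Z(t)+Q_{1}(t)+Q_{2}(t)\bigr),
\]
and multiplying by $t$ produces exactly (\ref{eq:morse1}) with $Q=Z+Q_{1}+Q_{2}$.

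The main obstacle I expect is not the algebra but the two structural facts underpinning it. First, the rigorous verification that the constrained Morse index is well defined and shifted by exactly one: this needs $H''(t_{\varepsilon})<0$ from Lemma \ref{lem:teps} together with an argument that the Hessian of $I_{\varepsilon}|_{{\mathcal N}_{\varepsilon}}$ is non-degenerate at $u$ whenever the free Hessian is, so that Morse theory on ${\mathcal N}_{\varepsilon}$ is legitimate. Second, one must choose $\delta$ so that $c^{*}$ is a regular value while simultaneously keeping $\delta$ small enough that the photography estimate $P_{t}({\mathcal N}_{\varepsilon}\cap I_{\varepsilon}^{c^{*}})=P_{t}(\Omega)+Z(t)$ remains valid. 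Everything else reduces to the contractibility of ${\mathcal N}_{\varepsilon}$ and the exactness of the homology sequence, which are routine once the Palais--Smale condition and boundedness from below of Section 4 are in hand.
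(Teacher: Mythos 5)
Your proof is correct in substance and closes the algebra properly, but it implements the Morse-theoretic bookkeeping differently from the paper: you run the Morse relations intrinsically on the Nehari manifold ${\mathcal N}_{\varepsilon}$, while the paper stays with the free functional on $H_{0}^{1}(\Omega)$ and splits the critical set by the sublevel pairs $(I_{\varepsilon}^{m_{\infty}+\delta},I_{\varepsilon}^{\alpha/2})$ and $(H_{0}^{1}(\Omega),I_{\varepsilon}^{m_{\infty}+\delta})$, citing from \cite{B,BC2} the relations (\ref{eq:morse3})--(\ref{eq:morse5}) and combining them with (\ref{eq:morse2}). The two routes are exactly parallel: your constrained Morse relation below $c^{*}$, multiplied by $t$ via the index shift $\mu(u)=m(u)+1$, is (\ref{eq:morse3}); your identity $P_{t}({\mathcal N}_{\varepsilon},{\mathcal N}_{\varepsilon}\cap I_{\varepsilon}^{c^{*}})=t\left(P_{t}({\mathcal N}_{\varepsilon}\cap I_{\varepsilon}^{c^{*}})-1\right)$, again after multiplication by $t$, is (\ref{eq:morse4}); and both proofs use the photography maps $\Phi_{\varepsilon},\beta$ only through (\ref{eq:morse2}). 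What your version buys is that the two facts the paper leaves inside citations become explicit: (i) the index shift, and (ii) the triviality of the ambient topology (contractibility of ${\mathcal N}_{\varepsilon}$ in your version, of $H_{0}^{1}$ and of low sublevels in the paper's); it also produces the remainder directly in the form $t(1+t)Q(t)$, and it automatically excludes the trivial critical point $u=0$ (which $K_{\varepsilon}$ must implicitly omit, since the right-hand side of (\ref{eq:morse1}) has no constant term; in the paper this exclusion is effected by the low sublevel $I_{\varepsilon}^{\alpha/2}$, which is why the hypothesis $\alpha=\inf_{\varepsilon}m_{\varepsilon}>0$, guaranteed by (\ref{eq:mepsminfty}), is needed). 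The cost is that you must actually prove (i) and (ii), and here two repairs are needed. For (i), your heuristic can be made rigorous cleanly: at a critical point the Lagrange multiplier vanishes, so $N_{\varepsilon}'(u)[\varphi]=I_{\varepsilon}''(u)[u,\varphi]$; hence $T_{u}{\mathcal N}_{\varepsilon}=\ker I_{\varepsilon}''(u)[u,\cdot]$ is the $I_{\varepsilon}''(u)$-orthogonal complement of $\mathbb{R}u$, the Hessian block-diagonalizes along $H_{0}^{1}=\mathbb{R}u\oplus T_{u}{\mathcal N}_{\varepsilon}$, and $I_{\varepsilon}''(u)[u,u]=N_{\varepsilon}'(u)[u]=(2-p)\|u\|_{\varepsilon}^{2}+(4-p)\omega G_{\varepsilon}(u)<0$ gives exactly one extra negative direction and transfers non-degeneracy; note this, not merely $H''(t_{\varepsilon})<0$ from Lemma \ref{lem:teps}, is what rules out off-diagonal interaction between the radial and tangential blocks. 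For (ii), your justification is wrong as stated: $\{|w^{+}|_{\varepsilon,p}=1\}$ is not a norm sphere (the gauge $w\mapsto|w^{+}|_{\varepsilon,p}$ vanishes on the cone of nonpositive functions), so the Kakutani-type contractibility of infinite-dimensional spheres does not apply; instead, fix $\varphi_{0}>0$ and contract $\{w:w^{+}\neq0\}$ by $H(w,s)=(1-s)w+s\varphi_{0}$, which keeps the positive part nonzero (on $\{w>0\}$ one has $H(w,s)>0$ for all $s$), then compose with the radial normalization $w\mapsto w/|w^{+}|_{\varepsilon,p}$ and the homeomorphism $w\mapsto t_{\varepsilon}(w)w$ of Lemma \ref{lem:teps}. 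With these two repairs your argument is complete and, if anything, more self-contained than the paper's sketch.
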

By Remark \ref{rem:morse} and by means of the maps $\Phi_{\varepsilon}$
and $\beta$ we have that 
\begin{equation}
P_{t}({\mathcal N}_{\varepsilon}\cap I_{\varepsilon}^{m_{\infty}+\delta})=P_{t}(\Omega)+Z(t)\label{eq:morse2}
\end{equation}
 where $Z(t)$ is a polynomial with non-negative coefficients. Provided
that $\inf_{\varepsilon}m_{\varepsilon}=:\alpha>0$, because ${\displaystyle \lim_{\varepsilon\rightarrow0}m_{\varepsilon}=m_{\infty}}$
(see \ref{eq:mepsminfty}) , we have the following relations \cite{B,BC2}
\begin{equation}
P_{t}(I_{\varepsilon}^{m_{\infty}+\delta},I_{\varepsilon}^{\alpha/2})
=tP_{t}({\mathcal N}_{\varepsilon}\cap I_{\varepsilon}^{m_{\infty}+\delta})\label{eq:morse3}
\end{equation}
\begin{equation}
P_{t}(H_{0}^{1}(\Omega),I_{\varepsilon}^{m_{\infty}+\delta}))
=t(P_{t}(I_{\varepsilon}^{m_{\infty}+\delta},I_{\varepsilon}^{\alpha/2})-t)\label{eq:morse4}
\end{equation}
\begin{equation}
\sum_{u\in K_{\varepsilon}}t^{\mu(u)}
=P_{t}(H_{0}^{1}(\Omega),I_{\varepsilon}^{m_{\infty}+\delta}))
+P_{t}(I_{\varepsilon}^{m_{\infty}+\delta},I_{\varepsilon}^{\alpha/2})+(1+t)\tilde{Q}(t)\label{eq:morse5}
\end{equation}
 where $\tilde{Q}(t)$ is a polynomial with non-negative integer coefficients.
Hence, by (\ref{eq:morse2}), (\ref{eq:morse3}), (\ref{eq:morse4}),
(\ref{eq:morse5}) we obtain (\ref{eq:morse1}). At this point, evaluating
equation (\ref{eq:morse1}) for $t=1$ we obtain the claim of Theorem
\ref{thm:2}

\section{The map $\Phi_{\varepsilon}$ }

For every $\xi\in\Omega^{-}$ we define the function
\[
W_{\xi,\varepsilon}(x)=U_{\varepsilon}(x-\xi)\chi(|x-\xi|)
\]
where $\chi:\mathbb{R}^{+}\rightarrow\mathbb{R}^{+}$ where $\chi\equiv1$
for $t\in[0,r/2)$, $\chi\equiv0$ for $t>r$ and $|\chi'(t)|\le2/r$.

We can define a map
\begin{eqnarray*}
\Phi_{\varepsilon} & : & \Omega^{-}\rightarrow{\mathcal N}_{\varepsilon}\\
\Phi_{\varepsilon}(\xi) & = & t_{\varepsilon}(W_{\xi,\varepsilon})W_{\xi,\varepsilon}
\end{eqnarray*}

\begin{rem}
\label{w}We have that the following limits hold uniformly with respect
to $\xi\in\Omega$
\begin{eqnarray*}
\|W_{\varepsilon,\xi}\|_{\varepsilon} & \rightarrow & \|U\|_{H^{1}(\mathbb{R}^{3})}\\
|W_{\varepsilon,\xi}|_{\varepsilon,t} & \rightarrow & \|U\|_{L^{t}(\mathbb{R}^{3})}\text{ for all }2\le t\le6
\end{eqnarray*}
\end{rem}
\begin{lem}
\label{lem:stimaGeps}There exists $\bar{\varepsilon}>0$ and a constant
$c>0$ such that 
\[
G_{\varepsilon}(W_{\varepsilon,\xi})=\frac{1}{\varepsilon^{3}}\int_{\Omega}qW_{\varepsilon,\xi}^{2}(x)\psi(W_{\varepsilon,\xi})dx<c\varepsilon^{2}
\]
\end{lem}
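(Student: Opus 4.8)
The plan is to reduce $G_\varepsilon(W_{\varepsilon,\xi})$ to a Dirichlet energy and then estimate that energy by a direct rescaling argument, taking care that every constant stays independent of $\xi$. First I would exploit the defining equation for $\psi$. Since $-\Delta\psi(W_{\varepsilon,\xi})=qW_{\varepsilon,\xi}^{2}$ with $\psi(W_{\varepsilon,\xi})\in H_{0}^{1}(\Omega)$, integrating by parts turns the integrand of the lemma into a gradient square:
\[
G_\varepsilon(W_{\varepsilon,\xi})=\frac{1}{\varepsilon^{3}}\int_{\Omega}qW_{\varepsilon,\xi}^{2}\psi(W_{\varepsilon,\xi})\,dx=\frac{1}{\varepsilon^{3}}\int_{\Omega}|\nabla\psi(W_{\varepsilon,\xi})|^{2}\,dx=\frac{1}{\varepsilon^{3}}\|\nabla\psi(W_{\varepsilon,\xi})\|_{L^{2}(\Omega)}^{2}.
\]
The whole lemma then amounts to showing $\|\nabla\psi(W_{\varepsilon,\xi})\|_{L^{2}(\Omega)}^{2}\le C\varepsilon^{5}$ uniformly in $\xi$.

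To estimate the Dirichlet energy I would test the equation against $\psi(W_{\varepsilon,\xi})$ itself, apply H\"older's inequality with the conjugate exponents $6/5$ and $6$, and then invoke the Sobolev embedding $H_{0}^{1}(\Omega)\hookrightarrow L^{6}(\Omega)$. Writing $\psi=\psi(W_{\varepsilon,\xi})$ for brevity,
\[
\|\nabla\psi\|_{L^{2}}^{2}=q\int_{\Omega}W_{\varepsilon,\xi}^{2}\psi\le q\,|W_{\varepsilon,\xi}^{2}|_{L^{6/5}}\,|\psi|_{L^{6}}\le qC\,|W_{\varepsilon,\xi}|_{L^{12/5}}^{2}\,\|\nabla\psi\|_{L^{2}},
\]
so dividing by $\|\nabla\psi\|_{L^{2}}$ yields the clean bound $\|\nabla\psi(W_{\varepsilon,\xi})\|_{L^{2}}\le qC\,|W_{\varepsilon,\xi}|_{L^{12/5}}^{2}$. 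The Sobolev constant $C$ depends only on the dimension, so this step introduces no dependence on $\xi$ and, crucially, avoids any appeal to the explicit Dirichlet Green's function of $\Omega$.

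It then remains to control $|W_{\varepsilon,\xi}|_{L^{12/5}}$. Here I would change variables $x=\xi+\varepsilon y$, bound the cutoff trivially by $\chi\le 1$, and enlarge the domain of integration from $\mathrm{supp}\,W_{\varepsilon,\xi}$ to all of $\mathbb{R}^{3}$:
\[
|W_{\varepsilon,\xi}|_{L^{12/5}(\Omega)}^{12/5}=\int_{\Omega}|U_{\varepsilon}(x-\xi)\chi(|x-\xi|)|^{12/5}\,dx\le\varepsilon^{3}\int_{\mathbb{R}^{3}}|U(y)|^{12/5}\,dy=\varepsilon^{3}\,|U|_{L^{12/5}(\mathbb{R}^{3})}^{12/5},
\]
which is finite since $U$ decays exponentially. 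Hence $|W_{\varepsilon,\xi}|_{L^{12/5}}^{4}\le\varepsilon^{5}|U|_{L^{12/5}}^{4}$, and stitching the three steps together gives
\[
G_\varepsilon(W_{\varepsilon,\xi})=\frac{1}{\varepsilon^{3}}\|\nabla\psi\|_{L^{2}}^{2}\le\frac{q^{2}C^{2}}{\varepsilon^{3}}\,|W_{\varepsilon,\xi}|_{L^{12/5}}^{4}\le q^{2}C^{2}|U|_{L^{12/5}}^{4}\,\varepsilon^{2}=:c\varepsilon^{2}
\]
uniformly in $\xi$, establishing the claim (indeed for every $\bar\varepsilon>0$).

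None of the individual steps is deep: morally the lemma just records the scaling $\psi(W_{\varepsilon,\xi})\approx\varepsilon^{2}\Psi((x-\xi)/\varepsilon)$, where $-\Delta\Psi=qU^{2}$ in $\mathbb{R}^{3}$, under which the Dirichlet energy carries a factor $\varepsilon^{5}$. The only points that require attention are the exponent bookkeeping --- the value $12/5$ is forced by pairing $L^{6/5}$ with $L^{6}$ and by the square on $W_{\varepsilon,\xi}$ --- and the uniformity in $\xi$, which holds because every constant above (the Sobolev constant and $|U|_{L^{12/5}(\mathbb{R}^{3})}$) is independent of the concentration point. I expect the main temptation to be the harder route through the Green's kernel on $\Omega$; choosing instead to argue variationally, as above, is precisely what renders the boundary of $\Omega$ irrelevant to the estimate.
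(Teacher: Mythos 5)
Your proposal is correct and takes essentially the same route as the paper: both test $-\Delta\psi(W_{\varepsilon,\xi})=qW_{\varepsilon,\xi}^{2}$ against $\psi(W_{\varepsilon,\xi})$, apply H\"older with exponents $6/5$ and $6$ plus the Sobolev embedding $H_{0}^{1}(\Omega)\hookrightarrow L^{6}(\Omega)$ to get $\|\nabla\psi(W_{\varepsilon,\xi})\|_{L^{2}}\le C\varepsilon^{5/2}$, and then conclude $G_{\varepsilon}(W_{\varepsilon,\xi})\le C\varepsilon^{2}$ by reusing the same estimate. The only cosmetic difference is that you justify the bound $|W_{\varepsilon,\xi}|_{L^{12/5}(\Omega)}^{12/5}\le\varepsilon^{3}|U|_{L^{12/5}(\mathbb{R}^{3})}^{12/5}$ by an explicit change of variables with $\chi\le1$ (which even yields uniformity for all $\varepsilon$), whereas the paper cites Remark \ref{w} on the convergence of the rescaled norms $|W_{\varepsilon,\xi}|_{\varepsilon,t}$.
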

\begin{proof}
It holds 
\begin{eqnarray*}
\|\psi(W_{\varepsilon,\xi})\|_{H_{0}^{1}(\Omega)}^{2} & = 
& \int_{\Omega}qW_{\varepsilon,\xi}^{2}(x)\psi(W_{\varepsilon,\xi})dx
\le q\|\psi(W_{\varepsilon,\xi})\|_{L^{6}(\Omega)}\left(\int_{\Omega}W_{\varepsilon,\xi}^{12/5}dx\right)^{5/6}\\
 & \le & c\|\psi(W_{\varepsilon,\xi})\|_{H_{0}^{1}(\Omega)}
 \left(\frac{1}{\varepsilon^{3}}\int_{\Omega}W_{\varepsilon,\xi}^{12/5}dx\right)^{5/6}\varepsilon^{5/2}
\end{eqnarray*}
By Remark \ref{w} we have that $\|\psi(W_{\varepsilon,\xi})\|_{H_{0}^{1}(\Omega)}\le\varepsilon^{5/2}$
and the claim follows by applying again Cauchy Schwartz inequality.\end{proof}
\begin{prop}
\label{prop:phieps}For all $\varepsilon>0$ the map $\Phi_{\varepsilon}$
is continuous. Moreover for any $\delta>0$ there exists $\varepsilon_{0}=\varepsilon_{0}(\delta)$
such that, if $\varepsilon<\varepsilon_{0}$ then $I_{\varepsilon}\left(\Phi_{\varepsilon}(\xi)\right)<m_{\infty}+\delta$.\end{prop}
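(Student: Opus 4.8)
The plan is to handle the two assertions separately: continuity of $\Phi_\varepsilon$ for each fixed $\varepsilon$, and the uniform (in $\xi$) energy bound as $\varepsilon\to 0$.

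For the continuity, I would first observe that $\xi\mapsto W_{\xi,\varepsilon}$ is continuous from $\Omega^-$ into $H_0^1(\Omega)$: for fixed $\varepsilon$ the function $W_{\xi,\varepsilon}(x)=U_\varepsilon(x-\xi)\chi(|x-\xi|)$ is a fixed compactly supported Lipschitz profile merely translated by $\xi$, so this is continuity of translations in $H^1$. Next, the scalar $t_\varepsilon(w)$ depends continuously on $w$: since $W_{\xi,\varepsilon}^+=W_{\xi,\varepsilon}>0$, Lemma \ref{lem:teps} applies and $t_\varepsilon$ is the unique positive root of $H'(t)=0$ at which $H''(t_\varepsilon)<0\neq 0$, so the implicit function theorem gives continuous dependence of $t_\varepsilon$ on the coefficients $\|w\|_\varepsilon^2$, $G_\varepsilon(w)$, $|w^+|_{\varepsilon,p}^p$, each of which depends continuously on $w$. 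Composing, $\Phi_\varepsilon(\xi)=t_\varepsilon(W_{\xi,\varepsilon})W_{\xi,\varepsilon}$ is continuous.

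For the energy estimate, write $t_\varepsilon=t_\varepsilon(W_{\xi,\varepsilon})$. Since $\Phi_\varepsilon(\xi)\in\mathcal N_\varepsilon$, Remark \ref{rem:nehari} gives
\[
I_\varepsilon(\Phi_\varepsilon(\xi))=\left(\tfrac12-\tfrac1p\right)t_\varepsilon^2\|W_{\xi,\varepsilon}\|_\varepsilon^2+\omega\left(\tfrac14-\tfrac1p\right)t_\varepsilon^4 G_\varepsilon(W_{\xi,\varepsilon}),
\]
while the Nehari condition $N_\varepsilon(t_\varepsilon W_{\xi,\varepsilon})=0$, divided by $t_\varepsilon^2$, reads
\[
\|W_{\xi,\varepsilon}\|_\varepsilon^2+\omega t_\varepsilon^2 G_\varepsilon(W_{\xi,\varepsilon})=t_\varepsilon^{p-2}|W_{\xi,\varepsilon}^+|_{\varepsilon,p}^p.
\]
The core of the argument is to show $t_\varepsilon\to 1$ as $\varepsilon\to 0$, uniformly in $\xi$. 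By Remark \ref{w} the quantities $\|W_{\xi,\varepsilon}\|_\varepsilon^2\to\|U\|_{H^1}^2$ and $|W_{\xi,\varepsilon}^+|_{\varepsilon,p}^p\to\|U\|_{L^p}^p$ uniformly in $\xi$, and Lemma \ref{lem:stimaGeps} gives $G_\varepsilon(W_{\xi,\varepsilon})\to 0$ uniformly. From the Nehari identity I would first confine $t_\varepsilon$ to a compact subset of $(0,\infty)$: the lower bound follows from $t_\varepsilon^{p-2}|W_{\xi,\varepsilon}^+|_{\varepsilon,p}^p\ge\|W_{\xi,\varepsilon}\|_\varepsilon^2\ge c>0$, and a uniform upper bound follows because if $t_\varepsilon\to\infty$ along a sequence, dividing the identity by $t_\varepsilon^{p-2}$ and using $p>4$ forces $|W_{\xi,\varepsilon}^+|_{\varepsilon,p}^p\to 0$, contradicting $|W_{\xi,\varepsilon}^+|_{\varepsilon,p}^p\to\|U\|_{L^p}^p>0$. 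With $t_\varepsilon$ bounded the term $\omega t_\varepsilon^2 G_\varepsilon(W_{\xi,\varepsilon})\to 0$, so passing to the limit yields $t_\varepsilon^{p-2}\to\|U\|_{H^1}^2/\|U\|_{L^p}^p=1$, the last equality because testing $-\Delta U+U=U^{p-1}$ with $U$ gives $\|U\|_{H^1}^2=\|U\|_{L^p}^p$, i.e. $U\in\mathcal N_\infty$.

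Feeding $t_\varepsilon\to 1$ back into the expression for $I_\varepsilon(\Phi_\varepsilon(\xi))$ and using $G_\varepsilon(W_{\xi,\varepsilon})\to 0$, I obtain $I_\varepsilon(\Phi_\varepsilon(\xi))\to\left(\tfrac12-\tfrac1p\right)\|U\|_{H^1}^2$ uniformly in $\xi$; since $U$ is the ground state this limit equals $m_\infty$. Hence $I_\varepsilon(\Phi_\varepsilon(\xi))\to m_\infty$ uniformly, and choosing $\varepsilon_0(\delta)$ small enough gives $I_\varepsilon(\Phi_\varepsilon(\xi))<m_\infty+\delta$ for all $\xi$ and all $\varepsilon<\varepsilon_0$. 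The main obstacle is precisely the uniform-in-$\xi$ control of $t_\varepsilon$: the whole conclusion rests on upgrading the pointwise limits of Remark \ref{w} and Lemma \ref{lem:stimaGeps} into the uniform statement $t_\varepsilon\to 1$, which is why the uniformity claimed in those two results is essential.
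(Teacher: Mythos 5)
Your proposal is correct and follows essentially the same route as the paper's proof: both reduce the energy estimate to showing $t_{\varepsilon}(W_{\xi,\varepsilon})\rightarrow1$ uniformly in $\xi$ via the Nehari identity, then pass to the limit in $I_{\varepsilon}(\Phi_{\varepsilon}(\xi))=\left(\frac{1}{2}-\frac{1}{p}\right)t_{\varepsilon}^{2}\|W_{\xi,\varepsilon}\|_{\varepsilon}^{2}+\omega\left(\frac{1}{4}-\frac{1}{p}\right)t_{\varepsilon}^{4}G_{\varepsilon}(W_{\xi,\varepsilon})$ using the uniform limits of Remark \ref{w} and Lemma \ref{lem:stimaGeps}. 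You merely supply details the paper leaves implicit (the two-sided confinement of $t_{\varepsilon}$, the identity $\|U\|_{H^{1}}^{2}=\|U\|_{L^{p}}^{p}$, and the implicit-function-theorem argument for continuity), all of which are sound.
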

\begin{proof}
It is easy to see that $\Phi_{\varepsilon}$ is continuous because
$t_{\varepsilon}(w)$ depends continously on $w\in H_{0}^{1}$.

At this point we prove that $t_{\varepsilon}(W_{\varepsilon,\xi})\rightarrow1$
uniformly with respect to $\xi\in\Omega$. In fact, by Lemma \ref{lem:teps}
$t_{\varepsilon}(W_{\varepsilon,\xi})$ is the unique solution of
\[
\|W_{\varepsilon,\xi}\|_{\varepsilon}^{2}+t^{2}\omega G_{\varepsilon}(W_{\varepsilon,\xi})-t^{p-2}|W_{\varepsilon,\xi}|_{\varepsilon,p}^{p}=0.
\]
By Remark \ref{w} and Lemma \ref{lem:stimaGeps} we have the claim.

Now, we have 
\[
I_{\varepsilon}\left(t_{\varepsilon}(W_{\varepsilon,\xi})W_{\varepsilon,\xi}\right)
=\left(\frac{1}{2}-\frac{1}{p}\right)\|W_{\varepsilon,\xi}\|_{\varepsilon}^{2}t_{\varepsilon}^{2}
+\omega\left(\frac{1}{4}-\frac{1}{p}\right)t_{\varepsilon}^{4}G_{\varepsilon}(W_{\varepsilon,\xi})
\]
Again, by Remark \ref{w} and Lemma \ref{lem:stimaGeps} we have

\[
I_{\varepsilon}\left(t_{\varepsilon}(W_{\varepsilon,\xi})W_{\varepsilon,\xi}\right)
\rightarrow\left(\frac{1}{2}-\frac{1}{p}\right)\|U\|_{H^{1}(\mathbb{R}^{3})}^{2}=m_{\infty}
\]
that concludes the proof.
\end{proof}
\begin{rem}
\label{rem:limsup}We set
\[
m_{\varepsilon}=\inf_{{\mathcal N}_{\varepsilon}}I_{\varepsilon.}
\]
By Proposition \ref{prop:phieps} we have that 

\begin{equation}
\limsup_{\varepsilon\rightarrow0}m_{\varepsilon}\le m_{\infty.}\label{eq:limsup}
\end{equation}
\end{rem}

\section{The map $\beta$}

For any $u\in{\mathcal N}_{\varepsilon}$ we can define a point $\beta(u)\in\mathbb{R}^{3}$
by 
\[
\beta(u)=\frac{\int_{\Omega}x|u^{+}|^{p}dx}{\int_{\Omega}|u^{+}|^{p}dx}.
\]
The function $\beta$ is well defined in ${\mathcal N}_{\varepsilon}$
because, if $u\in{\mathcal N}_{\varepsilon}$, then $u^{+}\neq0$.

We have to prove that, if $u\in{\mathcal N}_{\varepsilon}\cap I_{\varepsilon}^{m_{\infty}+\delta}$
then $\beta(u)\in\Omega^{+}$.

Let us consider partitions of $\Omega$. For a given $\varepsilon>0$
we say that a finite partition ${\mathcal P}_{\varepsilon}=\left\{ P_{j}^{\varepsilon}\right\} _{j\in\Lambda_{\varepsilon}}$
of $\Omega$ is a ``good'' partition
if: for any $j\in\Lambda_{\varepsilon}$ the set $P_{j}^{\varepsilon}$
is closed; $P_{i}^{\varepsilon}\cap P_{j}^{\varepsilon}\subset\partial P_{i}^{\varepsilon}\cap\partial P_{j}^{\varepsilon}$
for any $i\ne j$; there exist $r_{1}(\varepsilon),r_{2}(\varepsilon)>0$
such that there are points $q_{j}^{\varepsilon}\in P_{j}^{\varepsilon}$
for which  
$B(q_{j}^{\varepsilon},\varepsilon)\subset P_{j}^{\varepsilon}\subset B(q_{j}^{\varepsilon},r_{2}(\varepsilon))
\subset B_{g}(q_{j}^{\varepsilon},r_{1}(\varepsilon))$,
with $r_{1}(\varepsilon)\ge r_{2}(\varepsilon)\ge C\varepsilon$ for
some positive constant $C$; lastly, there exists a finite number
$\nu\in\mathbb{N}$ such that every $x\in\Omega$ is contained in
at most $\nu$ balls $B(q_{j}^{\varepsilon},r_{1}(\varepsilon))$,
where $\nu$ does not depends on $\varepsilon$.
\begin{lem}
\label{lem:gamma}There exists a constant $\gamma>0$ such that, for
any $\delta>0$ and for any $\varepsilon<\varepsilon_{0}(\delta)$
as in Proposition \ref{prop:phieps}, given any ``good'' partition
${\mathcal P}_{\varepsilon}=\left\{ P_{j}^{\varepsilon}\right\} _{j}$
of the domain $\Omega$ and for any function $u\in{\mathcal N}_{\varepsilon}\cap I_{\varepsilon}^{m_{\infty}+\delta}$
there exists, for an index $\bar{j}$ a set $P_{\bar{j}}^{\varepsilon}$
such that 
\[
\frac{1}{\varepsilon^{3}}\int_{P_{\bar{j}}^{\varepsilon}}|u^{+}|^{p}dx\ge\gamma.
\]
\end{lem}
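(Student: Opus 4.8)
The plan is to reduce the statement to a single \emph{local-to-global} Sobolev estimate with constants uniform in $\varepsilon$ and in the partition, from which $\gamma$ is read off directly. Before that I would record two uniform bounds on any $u\in{\mathcal N}_\varepsilon\cap I_\varepsilon^{m_\infty+\delta}$. From Remark \ref{rem:nehari}, since $p>4$ both coefficients $\frac12-\frac1p$ and $\frac14-\frac1p$ are positive and $G_\varepsilon(u)\ge0$ (because $\psi(u)\ge0$ by the maximum principle applied to \eqref{eq:psi}); hence $\left(\frac12-\frac1p\right)\|u\|_\varepsilon^2\le I_\varepsilon(u)\le m_\infty+\delta$, giving a uniform upper bound $\|u\|_\varepsilon^2\le C_0$. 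On the other hand, writing $N_\varepsilon(u)=0$ as $\|u\|_\varepsilon^2+\omega G_\varepsilon(u)=|u^+|_{\varepsilon,p}^p$ and using $G_\varepsilon(u)\ge0$ together with the scale-invariant Sobolev inequality $|w|_{\varepsilon,p}\le C_S\|w\|_\varepsilon$ (which follows because the dilation $y=x/\varepsilon$ turns $\|\cdot\|_\varepsilon,|\cdot|_{\varepsilon,p}$ into the standard $H^1(\mathbb{R}^3),L^p(\mathbb{R}^3)$ norms), and $\|u^+\|_\varepsilon\le\|u\|_\varepsilon$, I get $\|u\|_\varepsilon^2\le|u^+|_{\varepsilon,p}^p\le C_S^p\|u\|_\varepsilon^p$, hence a uniform lower bound $|u^+|_{\varepsilon,p}^p\ge a>0$.

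Next I would prove the one-piece estimate. Under the dilation $y=x/\varepsilon$, set $\tilde v(y)=u^+(\varepsilon y)$, so that $\frac1{\varepsilon^3}\int_{P_j^\varepsilon}|u^+|^p=\int_{\tilde P_j}|\tilde v|^p$ and $\int_{\tilde\Omega}(|\nabla\tilde v|^2+\tilde v^2)=\|u^+\|_\varepsilon^2\le\|u\|_\varepsilon^2\le C_0$, while the rescaled piece $\tilde P_j$ sits inside a ball of radius $r_2(\varepsilon)/\varepsilon$, which by the good-partition conditions lies in a fixed compact range. On each such ball I would split $|\tilde v|^p$ by H\"older, $\int_{\tilde P_j}|\tilde v|^p\le\left(\int_{\tilde P_j}|\tilde v|^p\right)^{\frac{p-2}{p}}|\tilde v|_{L^p(\tilde P_j)}^2$, and estimate the last factor by the embedding $H^1\hookrightarrow L^p$ on a ball of bounded radius, whose constant is uniform because the radii $r_1(\varepsilon)/\varepsilon,r_2(\varepsilon)/\varepsilon$ stay in a fixed compact range; this yields
\[
\int_{\tilde P_j}|\tilde v|^p\le C_S^2\Big(\sup_k\int_{\tilde P_k}|\tilde v|^p\Big)^{\frac{p-2}{p}}\int_{\tilde B_j}\big(|\nabla\tilde v|^2+\tilde v^2\big),
\]
where $\tilde B_j$ is the ball of radius $r_1(\varepsilon)/\varepsilon$ about $q_j^\varepsilon/\varepsilon$, so that $\tilde P_j\subset\tilde B_j$.

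Summing over $j$ is where the two defining features of a good partition enter: the interiors of the $P_j^\varepsilon$ are disjoint, so the left-hand sides add up to $\int_{\tilde\Omega}|\tilde v|^p=|u^+|_{\varepsilon,p}^p$, while the bounded-overlap condition (each point lies in at most $\nu$ of the balls $\tilde B_j$, with $\nu$ independent of $\varepsilon$) gives $\sum_j\int_{\tilde B_j}(|\nabla\tilde v|^2+\tilde v^2)\le\nu\,C_0$. Writing $m_k=\frac1{\varepsilon^3}\int_{P_k^\varepsilon}|u^+|^p$ and combining with the lower bound $a$, I obtain $a\le C_S^2\,\nu\,C_0\,(\max_k m_k)^{\frac{p-2}{p}}$, and therefore $\max_k m_k\ge\left(a/(C_S^2\nu C_0)\right)^{\frac{p}{p-2}}=:\gamma>0$; choosing $\bar j$ to realize the maximum finishes the proof.

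I expect the only real difficulty to be bookkeeping the \emph{uniformity} of every constant: the global Sobolev constant $C_S$ and the one-ball embedding constant must not degenerate as $\varepsilon\to0$, which is exactly what the dilation invariance of the $\varepsilon$-norms and the good-partition requirements (pieces and covering balls of radii comparable to $\varepsilon$, overlap bounded by a fixed $\nu$) are there to guarantee. Once these scale invariances are in place the estimate is insensitive to $\varepsilon$ and the partition, and $\gamma$ depends only on $p$, $\nu$, $m_\infty$ (through $C_0$) and the Sobolev constant.
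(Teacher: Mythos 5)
Your proposal runs on the same engine as the paper's proof: the splitting $\sum_j m_j=\sum_j m_j^{(p-2)/p}m_j^{2/p}\le\max_j m_j^{(p-2)/p}\sum_j m_j^{2/p}$ with $m_j=\frac{1}{\varepsilon^3}\int_{P_j^\varepsilon}|u^+|^p$, combined with the bounded-overlap local-to-global estimate $\sum_j|u_j^+|_{\varepsilon,p}^2\le C\nu\|u^+\|_\varepsilon^2$ --- which the paper does not reprove but quotes as \cite[Lemma 5.3]{BBM}, and which your rescaling argument (uniform local embedding constants, overlap at most $\nu$) correctly reconstructs. The difference is in how $\gamma$ is extracted at the end. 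The paper starts from the Nehari identity, discards the nonnegative term $\omega G_\varepsilon(u)$ to get $\|u\|_\varepsilon^2\le|u^+|_{\varepsilon,p}^p$, and then, since $\|u^+\|_\varepsilon\le\|u\|_\varepsilon$, the quantity $\|u\|_\varepsilon^2$ cancels from the two ends of the chain, leaving $\max_j m_j^{(p-2)/p}\ge 1/(C\nu)$ outright. You instead sandwich $|u^+|_{\varepsilon,p}^p$ between a Nehari--Sobolev lower bound $a$ and $\max_j m_j^{(p-2)/p}\,C_S^2\nu C_0$, where $C_0$ comes from the energy sublevel.

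This costs you something the statement literally asks for: your $C_0=\frac{2p}{p-2}(m_\infty+\delta)$ depends on $\delta$, hence so does your $\gamma=\left(a/(C_S^2\nu C_0)\right)^{p/(p-2)}$, whereas the lemma quantifies $\gamma$ \emph{before} $\delta$ (``there exists $\gamma>0$ such that, for any $\delta>0$\dots''); your closing claim that $\gamma$ depends only on $p$, $\nu$, $m_\infty$ and the Sobolev constant overlooks this. The slip is harmless in practice --- the lemma is only ever applied for small $\delta$, so capping $\delta\le m_\infty$ restores uniformity --- but it is worth noting that the paper's cancellation makes both of your a priori bounds ($\|u\|_\varepsilon^2\le C_0$ and $|u^+|_{\varepsilon,p}^p\ge a$) unnecessary: the resulting $\gamma=(C\nu)^{-p/(p-2)}$ is uniform in $\delta$ and $\varepsilon$, and the sublevel hypothesis is in fact never used, so the conclusion holds on all of ${\mathcal N}_\varepsilon$. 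One further small point: the definition of ``good'' partition gives only the lower bound $r_1(\varepsilon)\ge r_2(\varepsilon)\ge C\varepsilon$, not the upper bound implicit in your phrase ``fixed compact range'' for $r_2(\varepsilon)/\varepsilon$; fortunately only the lower bound is needed, since $H^1\hookrightarrow L^p$ embedding constants on balls are uniformly bounded once the radius is bounded away from zero, so your argument survives with that phrase corrected.
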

\begin{proof}
Taking in account that $I'(u)[u]=0$ we have
\begin{eqnarray*}
\|u\|_{\varepsilon}^{2} & = & |u^{+}|_{\varepsilon,p}^{p}
-\frac{1}{\varepsilon^{3}}\int_{\Omega}\omega u^{2}\psi(u)\le|u^{+}|_{\varepsilon,p}^{p}
=\sum_{j}\frac{1}{\varepsilon^{3}}\int_{P_{j}}|u^{+}|^{p}\\
 & = & \sum_{j}|u_{j}^{+}|_{\varepsilon,p}^{p}
 =\sum_{j}|u_{j}^{+}|_{\varepsilon,p}^{p-2}|u_{j}^{+}|_{\varepsilon,p}^{2}
 \le\max_{j}\left\{ |u_{j}^{+}|_{\varepsilon,p}^{p-2}\right\} \sum_{j}|u_{j}^{+}|_{\varepsilon,p}^{2}
\end{eqnarray*}
where $u_{j}^{+}$ is the restriction of the function $u^{+}$ on the
set $P_{j}$. 

At this point, arguing as in \cite[Lemma 5.3]{BBM}, we prove that there
exists a constant $C>0$ such that 
\[
\sum_{j}|u_{j}^{+}|_{\varepsilon,p}^{2}\le C\nu\|u^{+}\|_{\varepsilon}^{2},
\]
thus 
\[
\max_{j}\left\{ |u_{j}^{+}|_{\varepsilon,p}^{p-2}\right\} \ge\frac{1}{C\nu}
\]
 that conludes the proof.\end{proof}
\begin{prop}
\label{prop:conc}For any $\eta\in(0,1)$ there exists $\delta_{0}<m_{\infty}$
such that for any $\delta\in(0,\delta_{0})$ and any $\varepsilon\in(0,\varepsilon_{0}(\delta))$
as in Proposition \ref{prop:phieps}, for any function $u\in{\mathcal N}_{\varepsilon}\cap I_{\varepsilon}^{m_{\infty}+\delta}$
we can find a point $q=q(u)\in\Omega$ such that 
\[
\frac{1}{\varepsilon^{3}}\int_{B(q,r/2)}(u^{+})^{p}>\left(1-\eta\right)\frac{2p}{p-2}m_{\infty}.
\]
\end{prop}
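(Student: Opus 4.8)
The plan is to argue by contradiction, reducing the statement to a compactness assertion for a rescaled minimizing sequence of the limit problem. Fix $\eta\in(0,1)$ and suppose that no admissible $\delta_{0}$ exists; then I can extract sequences $\delta_{n}\to0$, $\varepsilon_{n}\to0$ and $u_{n}\in\mathcal{N}_{\varepsilon_{n}}\cap I_{\varepsilon_{n}}^{m_{\infty}+\delta_{n}}$ for which
\[
\frac{1}{\varepsilon_{n}^{3}}\int_{B(q,r/2)}(u_{n}^{+})^{p}\,dx\le(1-\eta)\frac{2p}{p-2}m_{\infty}\qquad\text{for every }q\in\Omega .
\]
First I would record the two facts that make the Maxwell coupling negligible in the limit. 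Arguing exactly as in Lemma \ref{lem:stimaGeps} (the $H_{0}^{1}$ estimate for $\psi(u_{n})$, together with the bound on $\|u_{n}\|_{\varepsilon_{n}}$ coming from Remark \ref{rem:nehari} and the energy level), one gets $G_{\varepsilon_{n}}(u_{n})=O(\varepsilon_{n}^{2})\to0$. Hence, by Remark \ref{rem:nehari} and the Nehari identity,
\[
I_{\varepsilon_{n}}(u_{n})=\Big(\tfrac12-\tfrac1p\Big)\|u_{n}\|_{\varepsilon_{n}}^{2}+o(1),\qquad |u_{n}^{+}|_{\varepsilon_{n},p}^{p}=\|u_{n}\|_{\varepsilon_{n}}^{2}+o(1).
\]

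Next I would rescale around the heavy cell produced by Lemma \ref{lem:gamma}. Let $q_{n}:=q_{\bar\jmath}^{\varepsilon_{n}}\in\Omega$ be the centre of a cell $P_{\bar\jmath}^{\varepsilon_{n}}$ with $\varepsilon_{n}^{-3}\int_{P_{\bar\jmath}^{\varepsilon_{n}}}(u_{n}^{+})^{p}\ge\gamma$, and set $v_{n}(y):=u_{n}(q_{n}+\varepsilon_{n}y)$, extended by $0$ outside $\varepsilon_{n}^{-1}(\Omega-q_{n})$, so $v_{n}\in H^{1}(\mathbb{R}^{3})$. The change of variables gives $\|v_{n}\|_{H^{1}(\mathbb{R}^{3})}^{2}=\|u_{n}\|_{\varepsilon_{n}}^{2}$ and $|v_{n}^{+}|_{L^{p}}^{p}=|u_{n}^{+}|_{\varepsilon_{n},p}^{p}$, while the geometry of a good partition ($P_{\bar\jmath}^{\varepsilon_{n}}\subset B(q_{n},C\varepsilon_{n})$) yields $\int_{B(0,C)}(v_{n}^{+})^{p}\ge\gamma$; moreover the contradiction hypothesis rescales to
\[
\sup_{z\in\mathbb{R}^{3}}\int_{B(z,R)}(v_{n}^{+})^{p}\,dy\le(1-\eta)\frac{2p}{p-2}m_{\infty}\qquad\text{for every fixed }R>0\text{ and }n\text{ large}.
\]
Thus $v_{n}$ is bounded in $H^{1}(\mathbb{R}^{3})$, does not vanish, and (up to a subsequence) $v_{n}\rightharpoonup v\ne0$ with $v_{n}\to v$ in $L^{p}_{\mathrm{loc}}$. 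Writing $S$ for the best constant in $|w|_{L^{p}}^{2}\le S^{-1}\|w\|_{H^{1}(\mathbb{R}^{3})}^{2}$, so that $\frac{2p}{p-2}m_{\infty}=S^{p/(p-2)}$, the energy bound gives $\limsup\|v_{n}\|^{2}\le S^{p/(p-2)}$, while $\|v_{n}\|^{2}=|v_{n}^{+}|_{p}^{p}+o(1)\le S^{-p/2}\|v_{n}\|^{p}+o(1)$ combined with the lower bound $\inf_{\mathcal{N}_{\varepsilon}}\|\cdot\|_{\varepsilon}>0$ forces $\|v_{n}\|^{2}\ge S^{p/(p-2)}+o(1)$. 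Hence $\|v_{n}\|^{2}\to S^{p/(p-2)}=\frac{2p}{p-2}m_{\infty}$.

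The heart of the argument is to upgrade this to strong convergence by excluding splitting of the mass. Put $s_{n}:=v_{n}-v\rightharpoonup0$ and $r_{n}:=v_{n}^{+}-v^{+}\to0$ a.e.; by the Brezis--Lieb lemma and weak convergence,
\[
\|v_{n}\|^{2}=\|v\|^{2}+\|s_{n}\|^{2}+o(1),\qquad |v_{n}^{+}|_{p}^{p}=|v^{+}|_{p}^{p}+|r_{n}|_{p}^{p}+o(1).
\]
Using $|r_{n}|\le|s_{n}|$ and the Sobolev inequality on each term gives $|v^{+}|_{p}^{p}\le S^{-p/2}\|v\|^{p}$ and $|r_{n}|_{p}^{p}\le S^{-p/2}\|s_{n}\|^{p}$. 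Inserting these in the Nehari identity $\|v_{n}\|^{2}=|v_{n}^{+}|_{p}^{p}+o(1)$ and letting (up to a subsequence) $\|s_{n}\|^{2}\to b$, $\|v\|^{2}=a$, with $a+b=S^{p/(p-2)}$, yields $a^{p/2}+b^{p/2}\ge(a+b)^{p/2}$. Since $p>2$, the map $t\mapsto t^{p/2}$ is strictly superadditive, so this is impossible unless $b=0$; as $a=\|v\|^{2}>0$, necessarily $\|s_{n}\|\to0$, i.e. $v_{n}\to v$ strongly in $H^{1}(\mathbb{R}^{3})$, hence in $L^{p}(\mathbb{R}^{3})$.

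Finally, strong convergence gives $|v^{+}|_{p}^{p}=\frac{2p}{p-2}m_{\infty}$, so by absolute continuity of the integral there is $R_{0}$ with $\int_{B(0,R_{0})}(v^{+})^{p}>(1-\eta)\frac{2p}{p-2}m_{\infty}$; by $L^{p}$ convergence the same strict inequality holds for $v_{n}$ with $n$ large, contradicting the rescaled hypothesis with $R=R_{0}$. This produces the desired $\delta_{0}$ and shows that one may take $q(u)=q_{\bar\jmath}^{\varepsilon}\in\Omega$. The main obstacle is exactly the strong-convergence step: everything hinges on first pinning down $\|v_{n}\|^{2}\to\frac{2p}{p-2}m_{\infty}$ (one ``bubble'' of mass) and then excluding dichotomy through the strict superadditivity of $t\mapsto t^{p/2}$, the anchoring supplied by Lemma \ref{lem:gamma} being what rules out vanishing and keeps the concentration centred at $q_{n}\in\Omega$.
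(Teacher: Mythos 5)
Your proof is correct, but at the decisive compactness step it takes a genuinely different route from the paper's. Both arguments share the opening moves: contradiction, the estimate $G_{\varepsilon_n}(u_n)=O(\varepsilon_n^2)$, rescaling around the heavy cell of Lemma \ref{lem:gamma}, and the anchor $\int_{B(0,C)}(v_n^+)^p\ge\gamma$ ruling out vanishing. From there the paper proceeds in two stages: it first proves the claim on the sublevel ${\mathcal N}_\varepsilon\cap I_\varepsilon^{m_\varepsilon+2\delta}$, where almost-minimality plus the Ekeland principle produces a PS sequence, so the weak limit $w$ is identified as a nontrivial solution of $-\Delta w+w=(w^+)^{p-1}$ in $\mathbb{R}^3$; the ground-state characterization of $m_\infty$ then pins $\|w\|_{H^1}^2=\frac{2p}{p-2}m_\infty$ and forces strong convergence; finally it derives $\lim_{\varepsilon\to0}m_\varepsilon=m_\infty$ (equation (\ref{eq:mepsminfty})) and transfers the claim to $I_\varepsilon^{m_\infty+\delta}$ via the inclusion ${\mathcal N}_\varepsilon\cap I_\varepsilon^{m_\infty+\delta}\subset{\mathcal N}_\varepsilon\cap I_\varepsilon^{m_\varepsilon+2\delta}$. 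You bypass Ekeland and the limit PDE entirely: the scaling identity $S|u|_{\varepsilon,p}^2\le\|u\|_\varepsilon^2$ (same constant $S$ for every $\varepsilon$, since the rescaled zero-extension lies in $H^1(\mathbb{R}^3)$) together with $\frac{2p}{p-2}m_\infty=S^{p/(p-2)}$ --- an identity that holds by pure homogeneity of the Nehari projection, no attainment needed --- pinches $\|v_n\|^2\to S^{p/(p-2)}$, and Brezis--Lieb plus strict superadditivity of $t\mapsto t^{p/2}$ excludes dichotomy. This buys several things: you work directly at level $m_\infty+\delta$, so the paper's two-stage bootstrap is unnecessary; the uniform Sobolev inequality gives $m_\varepsilon\ge m_\infty$ for every $\varepsilon$ in one line, which with Remark \ref{rem:limsup} recovers (\ref{eq:mepsminfty}) --- needed later for the Morse-theoretic part --- more cheaply than the paper's own derivation; and you avoid the delicate point, glossed over in the paper, of passing to the limit in the equation when $q_k$ lies within $O(\varepsilon_k)$ of $\partial\Omega$, where the rescaled domains converge to a half-space rather than $\mathbb{R}^3$. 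What the paper's route buys instead is the identification of the limiting profile as the ground state $U$, relevant to the companion work but not needed for this Proposition. Two small points to tighten: the lower bound you invoke, $\inf_{{\mathcal N}_\varepsilon}\|\cdot\|_\varepsilon>0$, is stated in the paper for fixed $\varepsilon$, while you need it uniformly along $\varepsilon_n\to0$ --- you in fact have it, either from the anchor $|v_n^+|_p^p\ge\gamma$ or from the $\varepsilon$-uniform inequality $\|u\|_\varepsilon^2\ge S^{p/(p-2)}$ on ${\mathcal N}_\varepsilon$, so say so; and, like the paper, you implicitly use that a cell of a good partition is contained in a ball of radius $O(\varepsilon)$, which is how such partitions are built but deserves a word.
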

\begin{proof}
First, we prove the proposition for $u\in{\mathcal N}_{\varepsilon}\cap I_{\varepsilon}^{m_{\varepsilon}+2\delta}$. 

By contradiction, we assume that there exists $\eta\in(0,1)$ such
that we can find two sequences of vanishing real number $\delta_{k}$
and $\varepsilon_{k}$ and a sequence of functions $\left\{ u_{k}\right\} _{k}$
such that $u_{k}\in{\mathcal N}_{\varepsilon_{k}}$, 
\begin{equation}
m_{\varepsilon_{k}}\le I_{\varepsilon_{k}}(u_{k})=\left(\frac{1}{2}-\frac{1}{p}\right)\|u_{k}\|_{\varepsilon_{k}}^{2}
+\omega\left(\frac{1}{4}-\frac{1}{p}\right)G_{\varepsilon_{k}}(u_{k})\le m_{\varepsilon_{k}}
+2\delta_{k}\le m_{\infty}+3\delta_{k}\label{eq:mepsk}
\end{equation}
 for $k$ large enough (see Remark \ref{rem:limsup}), and, for any
$q\in\Omega$, 
\[
\frac{1}{\varepsilon_{k}^{3}}\int_{B(q,r/2)}(u_{k}^{+})^{p}\le\left(1-\eta\right)\frac{2p}{p-2}m_{\infty}.
\]
By Ekeland principle and by definition of ${\mathcal N}_{\varepsilon_{k}}$
we can assume 
\begin{equation}
\left|I'_{\varepsilon_{k}}(u_{k})[\varphi]\right|\le\sigma_{k}\|\varphi\|_{\varepsilon_{k}}\text{ where }\sigma_{k}\rightarrow0.\label{eq:ps}
\end{equation}

By Lemma \ref{lem:gamma} there exists a set $P_{k}^{\varepsilon_{k}}\in{\mathcal P}_{\varepsilon_{k}}$
such that 
\[
\frac{1}{\varepsilon_{k}^{3}}\int_{P_{k}^{\varepsilon_{k}}}|u_{k}^{+}|^{p}dx\ge\gamma.
\]
 We choose a point $q_{k}\in P_{k}^{\varepsilon_{k}}$ and we define,
for $z\in\Omega_{\varepsilon_{k}}:=\frac{1}{\varepsilon_{k}}\left(\Omega-q_{k}\right)$
\[
w_{k}(z)=u_{k}(\varepsilon_{k}z+q_{k})=u_{k}(x).
\]

We have that $w_{k}\in H_{0}^{1}(\Omega_{\varepsilon_{k}})\subset H^{1}(\mathbb{R}^{3})$.
By equation (\ref{eq:mepsk}) we have 
\[
\|w_{k}\|_{H^{1}(\mathbb{R}^{3})}^{2}=\|u_{k}\|_{\varepsilon_{k}}^{2}\le C.
\]
So $w_{k}\rightarrow w$ weakly in $H^{1}(\mathbb{R}^{3})$ and strongly
in $L_{\text{loc}}^{t}(\mathbb{R}^{3})$. 

We set $\psi(u_{k})(x):=\psi_{k}(x)=\psi_{k}(\varepsilon_{k}z+q_{k}):=\tilde{\psi}_{k}(z)$
where $x\in\Omega$ and $z\in\Omega_{\varepsilon_{k}}$. It is easy
to verify that 
\[
-\Delta_{z}\tilde{\psi}_{k}(z)=\varepsilon_{k}^{2}qw_{k}^{2}(z).
\]
With abuse of language we set
\[
\tilde{\psi}_{k}(z)=\psi(\varepsilon_{k}w_{k}).
\]
Thus
\begin{eqnarray}
I_{\varepsilon_{k}}(u_{k}) & = & \frac{1}{2}\|u_{k}\|_{\varepsilon_{k}}^{2}-\frac{1}{p}|u_{k}^{+}|_{\varepsilon_{k},p}^{p}+\frac{\omega}{4}\frac{1}{\varepsilon_{k}^{3}}\int_{\Omega}qu_{k}^{2}\psi(u_{k})=\nonumber \\
 & = & \frac{1}{2}\|w_{k}\|_{H^{1}(\mathbb{R}^{3})}^{2}-\frac{1}{p}\|w_{k}^{+}\|_{L^{p}(\mathbb{R}^{3})}^{p}+\frac{\omega}{4}\int_{\Omega_{\varepsilon_{k}}}qw_{k}^{2}\psi(\varepsilon_{k}w_{k})=\label{eq:ik}\\
 & = & \frac{1}{2}\|w_{k}\|_{H^{1}(\mathbb{R}^{3})}^{2}-\frac{1}{p}\|w_{k}^{+}\|_{L^{p}(\mathbb{R}^{3})}^{p}+\varepsilon_{k}^{2}\frac{\omega}{4}\int_{\mathbb{R}^{3}}qw_{k}^{2}\psi(w_{k}):=E_{\varepsilon_{k}}(w_{k})\nonumber 
\end{eqnarray}
By definition of $E_{\varepsilon_{k}}:H^{1}(\mathbb{R}^{3})\rightarrow\mathbb{R},$
we get $E_{\varepsilon_{k}}(w_{k})\rightarrow m_{\infty}$. 

Given any $\varphi\in C_{0}^{\infty}(\mathbb{R}^{3})$ we set $\varphi(x)=\varphi(\varepsilon_{k}z+q_{k}):=\tilde{\varphi_{k}}(z)$.
For $k$ large enough we have that $\text{supp}\tilde{\varphi}_{k}\subset\Omega$
and, by (\ref{eq:ps}), that $E'_{\varepsilon_{k}}(w_{k})[\varphi]=I'_{\varepsilon_{k}}(u_{k})[\tilde{\varphi}_{k}]\rightarrow0.$
Moreover, by definiton of $E_{\varepsilon_{k}}$ and by Lemma \ref{lem:Tder}
we have 
\begin{eqnarray*}
E'_{\varepsilon_{k}}(w_{k})[\varphi] & = 
& \left\langle w_{k},\varphi\right\rangle _{H^{1}(\mathbb{R}^{3})}
-\int_{\mathbb{R}^{3}}|w_{k}^{+}|^{p-1}\varphi+\omega\varepsilon_{k}^{2}\int_{\mathbb{R}^{3}}qw_{k}\psi(w_{k})\varphi+\\
 & \rightarrow & \left\langle w,\varphi\right\rangle _{H^{1}(\mathbb{R}^{3})}-\int_{\mathbb{R}^{3}}|w^{+}|^{p-1}\varphi.
\end{eqnarray*}
Thus $w$ is a weak solution of 
\[
-\Delta w+w=(w^{+})^{p-1}\text{ on }\mathbb{R}^{3}.
\]
By Lemma \ref{lem:gamma} and by the choice of $q_{k}$ we have that
$w\ne0$, so $w>0$.

Arguing as in (\ref{eq:ik}), and using that $u_{k}\in{\mathcal N}_{\varepsilon_{k}}$
we have
\begin{eqnarray}
I_{\varepsilon_{k}}(u_{k}) & = & \left(\frac{1}{2}-\frac{1}{p}\right)\|u_{k}\|_{\varepsilon_{k}}^{2}
+\omega\left(\frac{1}{4}-\frac{1}{p}\right)\frac{1}{\varepsilon_{k}^{3}}\int_{\Omega}qu_{k}^{2}\psi(u_{k})\label{eq:ikH1}\\
 & = & \left(\frac{1}{2}-\frac{1}{p}\right)\|w_{k}\|_{H^{1}(\mathbb{R}^{3})}^{2}
 +\varepsilon_{k}^{2}\omega\left(\frac{1}{4}-\frac{1}{p}\right)\int_{\mathbb{R}^{3}}qw_{k}^{2}\psi(w_{k})\rightarrow m_{\infty}\nonumber 
\end{eqnarray}
and
\begin{eqnarray}
I_{\varepsilon_{k}}(u_{k}) & = & \left(\frac{1}{2}-\frac{1}{p}\right)|u_{k}^{+}|_{p,\varepsilon_{k}}^{p}
-\frac{\omega}{4}\frac{1}{\varepsilon_{k}^{3}}\int_{\Omega}qu_{k}^{2}\psi(u_{k})\label{eq:ikLp}\\
 & = & \left(\frac{1}{2}-\frac{1}{p}\right)|w_{k}^{+}|_{p}^{p}
 -\varepsilon_{k}^{2}\frac{\omega}{4}\int_{\mathbb{R}^{3}}qw_{k}^{2}\psi(w_{k})\rightarrow m_{\infty}.\nonumber 
\end{eqnarray}
So, by (\ref{eq:ikH1}) we have that $\|w\|_{H^{1}(\mathbb{R}^{3})}^{2}=\frac{2p}{p-2}m_{\infty}$
and that $\left(\frac{1}{2}-\frac{1}{p}\right)\|w_{k}\|_{H^{1}(\mathbb{R}^{3})}^{2}\rightarrow m_{\infty}$
and we conclude that $ $$w_{k}\rightarrow w$ strongly in $H^{1}(\mathbb{R}^{3})$.

Given $T>0$, by the definiton of $w_{k}$ we get, for $k$ large
enough
\begin{eqnarray}
|w_{k}^{+}|_{L^{p}(B(0,T))}^{p} & = 
& \frac{1}{\varepsilon_{k}^{3}}\int_{B(q_{k},\varepsilon_{k}T)}|u_{k}^{+}|^{p}dx
\le\frac{1}{\varepsilon_{k}^{3}}\int_{B(q_{k},r/2)}|u_{k}^{+}|^{p}dx\nonumber \\
 & \le & \left(1-\eta\right)\frac{2p}{p-2}m_{\infty}.\label{eq:contr}
\end{eqnarray}
Then we have the contradiction. In fact, by (\ref{eq:ikLp}) we have
$\left(\frac{1}{2}-\frac{1}{p}\right)|w_{k}^{+}|_{p}^{p}\rightarrow m_{\infty}$
and this contradicts (\ref{eq:contr}). At this point we have proved
the claim for $u\in{\mathcal N}_{\varepsilon}\cap I_{\varepsilon}^{m_{\varepsilon}+2\delta}$.
Now, by the thesis for $u\in{\mathcal N}_{\varepsilon}\cap I_{\varepsilon}^{m_{\varepsilon}+2\delta}$
and by (\ref{eq:ikLp}) we have
\[
I_{\varepsilon_{k}}(u_{k})=\left(\frac{1}{2}-\frac{1}{p}\right)|u_{k}^{+}|_{p,\varepsilon_{k}}^{p}
+O(\varepsilon^{2})\ge(1-\eta)m_{\infty}+O(\varepsilon^{2})
\]
and, passing to the limit, 
\[
\liminf_{k\rightarrow\infty}m_{\varepsilon_{k}}\ge m_{\infty}.
\]
This, combined by (\ref{eq:limsup}) gives us that 
\begin{equation}
\lim_{\varepsilon\rightarrow0}m_{\varepsilon}=m_{\infty}.\label{eq:mepsminfty}
\end{equation}
Hence, when $\varepsilon,\delta$ are small enough, 
${\mathcal N}_{\varepsilon}\cap I_{\varepsilon}^{m_{\infty}+\delta}
\subset{\mathcal N}_{\varepsilon}\cap I_{\varepsilon}^{m_{\varepsilon}+2\delta}$
and the general claim follows.
\end{proof}
\begin{prop}
There exists $\delta_{0}\in(0,m_{\infty})$ such that for any $\delta\in(0,\delta_{0})$
and any $\varepsilon\in(0,\varepsilon(\delta_{0})$ (see Proposition
\ref{prop:phieps}), for every function $u\in{\mathcal N}_{\varepsilon}\cap I_{\varepsilon}^{m_{\infty}+\delta}$
it holds $\beta(u)\in\Omega^{+}$. Moreover the composition 
\[
\beta\circ\Phi_{\varepsilon}:\Omega^{-}\rightarrow\Omega^{+}
\]
 is s homotopic to the immersion $i:\Omega^{-}\rightarrow\Omega^{+}$ \end{prop}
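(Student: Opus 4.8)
The plan is to prove the two assertions separately: first the inclusion $\beta(u)\in\Omega^{+}$, which follows from the concentration estimate of Proposition \ref{prop:conc}, and then the homotopy statement, which reduces to an essentially explicit computation of $\beta\circ\Phi_{\varepsilon}$.

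For the first part I would fix $\eta\in(0,1)$ and invoke Proposition \ref{prop:conc} to produce, for $\delta<\delta_{0}$ and $\varepsilon<\varepsilon_{0}(\delta)$, a point $q=q(u)\in\Omega$ with $\frac{1}{\varepsilon^{3}}\int_{B(q,r/2)}(u^{+})^{p}>(1-\eta)\frac{2p}{p-2}m_{\infty}$. The auxiliary ingredient is a matching upper bound on the total mass $|u^{+}|_{\varepsilon,p}^{p}$: from Remark \ref{rem:nehari} one has $\left(\frac12-\frac1p\right)|u^{+}|_{\varepsilon,p}^{p}=I_{\varepsilon}(u)+\frac{\omega}{4}G_{\varepsilon}(u)$, and since $G_{\varepsilon}(u)\ge0$ and, as established in the proof of Proposition \ref{prop:conc}, $G_{\varepsilon}(u)=O(\varepsilon^{2})$ on $\mathcal{N}_{\varepsilon}\cap I_{\varepsilon}^{m_{\infty}+\delta}$, together with \eqref{eq:mepsminfty} this yields $|u^{+}|_{\varepsilon,p}^{p}\le\frac{2p}{p-2}(m_{\infty}+o(1))$ as $\delta,\varepsilon\to0$. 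Comparing the two estimates shows that the fraction of $L^{p}$-mass lying outside $B(q,r/2)$ is $O(\eta+\delta)$. I would then split $\beta(u)-q=\frac{\int_{\Omega}(x-q)(u^{+})^{p}}{\int_{\Omega}(u^{+})^{p}}$ into the contributions of $B(q,r/2)$, where $|x-q|\le r/2$, and of its complement, where $|x-q|\le\mathrm{diam}(\Omega)$ multiplied by the small outside fraction. Choosing $\eta$ and $\delta_{0}$ small enough (in terms of $r$ and $\mathrm{diam}(\Omega)$) so that $\mathrm{diam}(\Omega)\cdot(\text{outside fraction})<r/2$ gives $|\beta(u)-q|<r$; since $q\in\Omega$ this means $d(\beta(u),\Omega)<r$, i.e. $\beta(u)\in\Omega^{+}$.

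For the homotopy I would first observe that $\beta$ is invariant under multiplication by positive scalars, so $\beta(\Phi_{\varepsilon}(\xi))=\beta(t_{\varepsilon}(W_{\xi,\varepsilon})W_{\xi,\varepsilon})=\beta(W_{\xi,\varepsilon})$. For $\xi\in\Omega^{-}$ the support of $W_{\xi,\varepsilon}$ lies in $B(\xi,r)\subset\Omega$, so the integrals defining $\beta$ extend over all of $\mathbb{R}^{3}$. After the rescaling $x=\xi+\varepsilon y$ the density $|W_{\xi,\varepsilon}|^{p}$ becomes $U(y)^{p}\chi(\varepsilon|y|)^{p}$, which is radial in $y$; hence $\int y\,U(y)^{p}\chi(\varepsilon|y|)^{p}\,dy=0$ by oddness and one finds $\beta(W_{\xi,\varepsilon})=\xi$ exactly. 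Thus $\beta\circ\Phi_{\varepsilon}$ coincides with the inclusion $i:\Omega^{-}\to\Omega^{+}$ and is in particular homotopic to it. Should one prefer to avoid relying on exact symmetry, the same change of variables together with Remark \ref{w} gives $|\beta(\Phi_{\varepsilon}(\xi))-\xi|\to0$ uniformly in $\xi$, and the straight-line homotopy $H(s,\xi)=(1-s)\beta(\Phi_{\varepsilon}(\xi))+s\xi$ remains in $\Omega^{+}$ for $\varepsilon$ small, which also yields the claim.

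I expect the quantitative bookkeeping in the first part to be the main obstacle: the parameter $\eta$ in Proposition \ref{prop:conc} and the threshold $\delta_{0}$ must be calibrated so that the mass escaping from $B(q,r/2)$ displaces the barycenter by strictly less than $r/2$. The homotopy part, by contrast, is essentially immediate, resting only on the radial symmetry of $U$ and on the fact that for $\xi\in\Omega^{-}$ the support of $W_{\xi,\varepsilon}$ sits inside $\Omega$.
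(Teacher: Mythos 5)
Your argument for the first claim is the same as the paper's: combine the concentration estimate of Proposition \ref{prop:conc} with the Nehari identity of Remark \ref{rem:nehari} and a bound $G_{\varepsilon}(u)\le C\varepsilon^{2}$ (which the paper re-derives in place by the argument of Lemma \ref{lem:stimaGeps}, rather than quoting it from Proposition \ref{prop:conc}) to control the fraction of $L^{p}$-mass outside $B(q,r/2)$, and then split the barycenter integral exactly as you do. For the homotopy the paper merely writes ``the second claim is standard,'' and your filled-in argument --- scale invariance of $\beta$, $\operatorname{supp}W_{\xi,\varepsilon}\subset B(\xi,r)\subset\Omega$ for $\xi\in\Omega^{-}$, and radial symmetry giving $\beta(\Phi_{\varepsilon}(\xi))=\xi$ (or the straight-line homotopy fallback) --- is a correct instance of that standard argument, so your proposal matches the paper's proof.
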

\begin{proof}
By Proposition \ref{prop:conc}, for any function $u\in{\mathcal N}_{\varepsilon}\cap I_{\varepsilon}^{m_{\infty}+\delta}$,
for any $\eta\in(0,1)$ and for $\varepsilon,\delta$ small enough,
we can find a point $q=q(u)\in\Omega$ such that 
\[
\frac{1}{\varepsilon^{3}}\int_{B(q,r/2)}(u^{+})^{p}>\left(1-\eta\right)\frac{2p}{p-2}m_{\infty}.
\]
Moreover, since $u\in{\mathcal N}_{\varepsilon}\cap I_{\varepsilon}^{m_{\infty}+\delta}$
we have 
\[
I_{\varepsilon}(u)=\left(\frac{p-2}{2p}\right)|u^{+}|_{p,\varepsilon}^{p}
-\frac{\omega}{4}\frac{1}{\varepsilon^{3}}\int_{\Omega}qu^{2}\psi(u)\le m_{\infty}+\delta.
\]
Now, arguing as in Lemma \ref{lem:stimaGeps} we have that 
\[
\|\psi(u)\|_{H^{1}(\Omega)}^{2}=q\int_{\Omega}\psi(u)u^{2}\le C\|\psi(u)\|_{H^{1}(\Omega)}\left(\int_{\Omega}u^{12/5}\right)^{5/6},
\]
so $\|\psi(u)\|_{H^{1}(\Omega)}\le\left(\int_{\Omega}u^{12/5}\right)^{5/6}$,
then 
\begin{eqnarray*}
\frac{1}{\varepsilon^{3}}\int\psi(u)u^{2} & \le 
& \frac{1}{\varepsilon^{3}}\|\psi\|_{H^{1}(\Omega)}\left(\int_{\Omega}u^{12/5}\right)^{5/6}
\le C\frac{1}{\varepsilon^{3}}\left(\int_{\Omega}u^{12/5}\right)^{5/3}\\
 & \le & C\varepsilon^{2}|u|_{12/5,\varepsilon}^{4}\le C\varepsilon^{2}\|u\|_{\varepsilon}^{4}\le C\varepsilon^{2}
\end{eqnarray*}
because $\|u\|_{\varepsilon}$ is bounded since $u\in{\mathcal N}_{\varepsilon}\cap I_{\varepsilon}^{m_{\infty}+\delta}$.

Hence, provided we choose $\varepsilon(\delta_{0})$ small enough,
we have 
\[
\left(\frac{p-2}{2p}\right)|u^{+}|_{p,\varepsilon}^{p}\le m_{\infty}+2\delta_{0}.
\]
 So, 
\[
\frac{\frac{1}{\varepsilon^{3}}\int_{B(q,r/2)}(u^{+})^{p}}{|u^{+}|_{p,\varepsilon}^{p}}>\frac{1-\eta}{1+2\delta_{0}/m_{\infty}}
\]
Finally, 
\begin{eqnarray*}
|\beta(u)-q| & \le & \frac{\left|\frac{1}{\varepsilon^{3}}\int_{\Omega}(x-q)(u^{+})^{p}\right|}{|u^{+}|_{p,\varepsilon}^{p}}\\
 & \le & \frac{\left|\frac{1}{\varepsilon^{3}}\int_{B(q,r/2)}(x-q)(u^{+})^{p}\right|}{|u^{+}|_{p,\varepsilon}^{p}}+\frac{\left|\frac{1}{\varepsilon^{3}}\int_{\Omega\smallsetminus B(q,r/2)}(x-q)(u^{+})^{p}\right|}{|u^{+}|_{p,\varepsilon}^{p}}\\
 & \le & \frac{r}{2}+2\text{diam}(\Omega)\left(1-\frac{1-\eta}{1+2\delta_{0}/m_{\infty}}\right),
\end{eqnarray*}
so, choosing $\eta$, $\delta_{0}$ and $\varepsilon(\delta_{0})$
small enough we proved the first claim. The second claim is standard.\end{proof}

\end{document}